\documentclass[letterpaper,11pt]{amsart}
\usepackage{latexsym,array,blkarray,delarray,amsthm,amssymb,epsfig,amsmath,blkarray,tikz,algorithmic}
\usepackage{url}
\usepackage{tkz-euclide}
\usepackage{subfig}
\usepackage{subcaption}
\usepackage{circuitikz}
\usepackage{longtable}
\usetikzlibrary{decorations.markings}
\usetikzlibrary{arrows}

\usepackage[dvipsnames]{xcolor}

\theoremstyle{plain}
\newtheorem{thm}{Theorem}[section]
\newtheorem{lemma}[thm]{Lemma}

\newtheorem*{thm*}{Theorem}
\newtheorem*{lemma*}{Lemma}
\newtheorem*{prop*}{Proposition}
\newtheorem*{cor*}{Corollary}
\newtheorem*{conj*}{Conjecture}
\newtheorem*{rep@theorem}{\rep@title}
\newcommand{\newreptheorem}[2]{%
\newenvironment{rep#1}[1]{%
 \def\rep@title{#2 \ref{##1}}%
 \begin{rep@theorem}}%
 {\end{rep@theorem}}}
 \newreptheorem{theorem}{Theorem}

\theoremstyle{definition}
\newtheorem{defn}[thm]{Definition}
\newtheorem{ex}[thm]{Example}

\theoremstyle{remark}

\newcommand{\pp}{\mathbb{P}}

\newcommand{\rr}{\mathbb{R}}

\newcommand{\ind}{\mbox{$\perp \kern-5.5pt \perp$}}

\newcommand{\rank}{\mathrm{rank}}

\newcommand{\trunc}{\mathrm{trunc}}
\newcommand{\pa}{\mathrm{pa}}
\newcommand{\nd}{\mathrm{nd}}
\newcommand{\an}{\mathrm{an}}
\newcommand{\de}{\mathrm{de}}
\newcommand{\ch}{\mathrm{ch}}
\newcommand{\term}{\mathrm{term}}
\newcommand{\pinch}{\mathrm{pinch}}
\newcommand{\diag}{\mathrm{diag}}
\newcommand{\Flat}{\mathrm{Flat}}
\newcommand{\sib}{\mathrm{sib}}

\newcommand{\indep}{\perp \!\!\! \perp}

\begin{document}
\title{Identifiability of a Simple Model of Lateral Gene Transfer}
\author{Devon Olds, Seth Sullivant}
\address{Department of Mathematics \\
North Carolina State University, Raleigh, NC, 27695}
\email{dolds@ncsu.edu}
\email{smsulli2@ncsu.edu}
\keywords{}

\maketitle
\begin{abstract}
In evolutionary biology, factors like lateral gene transfer complicate the tree of life, making inference of a species tree more difficult. 
The phenomenon of lateral gene transfer allows genetic material to be passed between organisms as opposed to inheritance, causing the tree for a particular gene to differ from the species tree. 
In this work, we define a model of lateral gene transfer
on site patterns.
In the case where lateral gene transfer is restricted to occur only between
closely related species, we show that the unrooted topology of the species tree is identifiable from SNP data on the taxa.
Our proof involves showing a connection between our lateral gene transfer
model and graphical models on a related tree, and uses ranks of flattenings
to identify splits in the tree.  We also report on results of using
the singular value decomposition on flattening matrices to identify
the unrooted topology in simulated data.  
\end{abstract}


\section{Introduction}

In the study of evolutionary biology, a common goal is to determine the true evolutionary history of a collection of  species based on the 
data about those species. 
The evolutionary history is typically represented by a (rooted) phylogenetic
tree, the tree-structure showing the history of speciations that lead
to the extant species, represented by the leaves of the tree.  

The most basic phylogenetic models are simple Markov processes
on a tree, and the observed data is DNA sequence data from a single gene, common to all
the species.
Non-tree-like processes are also present in evolution,
and it has become increasingly important to study and model these
structures to recover the correct evolutionary structure, and to have more complex data sets. Lateral gene transfer, hybridization, recombination, and gene duplication are all examples of processes which cannot be described by a purely tree-like model of evolution \cite{reticulate_networks_based_on_coalescent,consensus_networks,application_of_phylo_networks,survey_combinatorial_methods_phylo_networks}.  

In this paper, we consider a model for the evolution of DNA on a tree with the addition of lateral gene transfer. Lateral, or horizontal, gene transfer is the process by which genetic material is transferred directly between organisms (horizontally) as opposed to inheritance (vertically). Lateral gene transfer is especially common among prokaryotes, and may play a critical role in the development of antibiotic resistance \cite{lgt_eukaryotes,Barlow2009-ez,Garcia-Vallve2000-ow}. Additionally, lateral gene transfer has been recorded in eukaryotes, and even between kingdoms (for example, between plants and fungi) \cite{Graham2021-mn,Quispe-Huamanquispe2017-ae}. This makes phylogenetic inference complicated, as the history of an individual gene can be quite adventurous compared to the history of the species as a whole. Tools to determine phylogenetic trees despite these aberrant gene lineages are becoming increasingly necessary as we uncover the extent of lateral gene transfer in nature. Alternatively, some research has found that lateral gene transfer events can provide information themselves about the history of a species. Research in \cite{Abby2012-vt} and \cite{Szollosi2012-tp} found that lateral gene transfer can help distinguish clades of prokaryotic life, which helps establish a timeline for a branch of life with a limited fossil record. These studies further show that the study of lateral gene transfer is fruitful and worthy of consideration in biological applications.

Since the gene tree and species tree may differ for a particular gene and species in which lateral gene transfer occurs, our main goal is to infer the species tree from data on the genes. Although lateral gene transfer can obscure the original species tree, it has been shown that the species tree is identifiable with surprisingly high amounts of gene transfer 
\cite{daskalakis2017speciestreesrecoverableunrooted,Roch2013-tc}.
However, these results infer
the species tree from observations of gene trees.  A difference with our
approach is that we are considering SNP data rather than gene trees as
data. SNP data, or single nucleotide polymorphism data, describes the variation of a single gene among a population of organisms. Since lateral gene transfer affects large contiguous sections of the genome in a single event, models that assume site independence of mutations are only appropriate when considering SNPs \cite{utility_snps,Phylogeographic_reconstructionq_high_lgt,Phylogenetic_understanding__clonal_populations}. 

The framework of investigating probabilities of site patterns using SNP data is also used in \cite{kubatko2026revisitingrandommodellateral}, where they examine identifiability of species trees and tree parameters in the two and three taxon cases. In the both cases they are able to directly compute the site pattern probabilities by integrating over all possible gene trees and the densities of the divergence times. This is made possible by the fact that the gene tree is determined by a finite number of relevant gene transfers. In the two taxon case, they also are able to show that the lateral gene transfer rate and the time of divergence in the species tree are each generically identifiable given the other. In the three taxon case, they show that the species tree topology is also generically identifiable.

Another unique element of our approach is to restructure the Markov process to incorporate lateral gene transfer. Basic models of evolution on a tree use a Markov process across the edges \cite{allman2006phylogeneticidealsvarietiesgeneral,leaf_colourations}. This works well for independently evolving lineages but has trouble representing lateral gene transfer events which span multiple lineages. 
When such an event takes place, the state of the gene on one edge overwrites the state of the gene on a second edge. In this way, random variables potentially depend on both their ancestor variables and  concurrent, non-ancestor edges. Basic Markov models do not account for both the standard evolution process and gene transfer. We will define a new Markov process which describes the evolution of sequences of states across all contemporaneous edges. Since multiple lineages will be described with the same transition matrix, the Markov process will be able to include lateral gene transfer events.

To understand and describe this Markov process, we first give background on Markov models on
trees in Section \ref{sec:Markov}, including variations on trees structures we will need for our
model.  
In Section \ref{sec:local} we present a general version of our
lateral gene transfer model.  Then, we focus on a local version of the model,
which only allows lateral gene transfer events to occur between closely
related species.  In Section \ref{sec:graphical}, we explain how
to interpret the local model within the context of the family of graphical models.
Using this perspective, we prove the main result of the paper:
\begin{thm}
    Let $T$ and $T'$ be tiered X-trees with differing unrooted topologies. Then under the Local Gene Transfer Model these trees are generically distinguishable.
\end{thm}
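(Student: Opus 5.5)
The plan is to follow the strategy announced in the abstract: realize the Local Gene Transfer Model on a tiered $X$-tree $T$ as a graphical model on a related tree $\tilde T$, and then use ranks of flattening matrices to detect splits. Since $T$ and $T'$ have different unrooted topologies, there is a split $A|B$ of $X$ that is compatible with $T$ but not with $T'$. It suffices to find a polynomial invariant, namely a minor of the flattening $\Flat_{A|B}(P)$ of the joint leaf distribution $P$, that vanishes on the model for $T$ but not identically on the model for $T'$. Generic distinguishability then follows because the two parametrized images are irreducible and neither is contained in the other. Running the symmetric argument with a split of $T'$ that is not in $T$ shows that the two model images intersect in a proper subvariety.

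\textbf{Step 1.} Make the graphical model interpretation precise. In the local model, transfers occur only between closely related (sibling or adjacent-tier) lineages. For each such transfer-coupled block of contemporaneous edges, I would group the edges into a single vertex of a new tree $\tilde T$ whose state space is the product of the edge state spaces, say $\kappa^m$ for $m$ grouped lineages. The Markov process across tiers then becomes an ordinary tree Markov model on $\tilde T$, where each leaf of $\tilde T$ may be a grouped variable that contains several taxa and is later marginalized or split.

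\textbf{Step 2.} Bound the upper rank. If $A|B$ is a split of $T$, I would verify that it is induced by an edge of $\tilde T$, or by a vertex that separates $A$ from $B$, whose state space has size $r$. Here $r$ is bounded in terms of $\kappa$ and the local transfer structure, for instance $r \le \kappa^2$ when a cherry is merged. Standard conditional independence for tree graphical models then gives $\rank \Flat_{A|B}(P) \le r$ for every parameter choice on $T$.

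\textbf{Step 3.} Establish the lower rank. For $T'$, in which $A|B$ is not a split, I would exhibit a single parameter choice whose flattening has rank strictly greater than $r$. Generic rank is lower semicontinuous, so one witness point suffices. I would try parameters that turn off transfer (setting the transfer rate to zero) and use the identity matrix on the edges crossing a quartet that separates $A$ from $B$ incompatibly. This reduces the problem to the known fact for the general Markov model that a non-split flattening has rank $\kappa^2$ or larger on a quartet, giving rank at least $\kappa^2$.

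\textbf{Main obstacle.} I expect the hard part to be matching these two bounds. The grouped vertices of $\tilde T$ inflate the rank bound for genuine splits, possibly up to $\kappa^2$, which can equal the generic rank of non-splits in the transfer-free model. To handle this, I would choose $A|B$ carefully, preferring splits of $T$ that pass through a single ungrouped edge so that the bound is $\kappa$. If that fails, I would use splits with $|A|, |B|$ large so that the incompatible side has rank $\kappa^{k}$ with $k \ge 3$, computed on a caterpillar witness with identity transitions. Failing both, I would consider flattenings with respect to different $A|B$ that separate tiers appropriately.
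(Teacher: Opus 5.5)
Your overall architecture matches the paper's: reinterpret the model as a graphical model, bound flattening ranks on true splits, and use a transfer-free witness for non-splits. Your Step 3 witness is essentially the paper's (all $\lambda=0$, all rates zero except on one edge, giving a $\kappa^2\times\kappa^2$ submatrix with one nonzero entry per row and column). But there is a genuine gap exactly where you flag the ``main obstacle''. You never show that the rank bound on splits of $T$ is strictly below the generic rank $\kappa^2$ on non-splits, and your fallbacks do not close it.

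Concretely, take $T$ to be the balanced quartet $((1,2),(3,4))$ and $T'=((1,3),(2,4))$. The only split in $\Sigma(T)\setminus\Sigma(T')$ is $12|34$, and it lives on the two root edges. These are sibling edges coupled by transfer, so in your $\tilde T$ they become one vertex with $\kappa^2$ states. Every vertex separating $\{1,2\}$ from $\{3,4\}$ then has $\kappa^2$ states, so your bound is $\kappa^2$. That is already the maximal rank of a $\kappa^2\times\kappa^2$ flattening. There is no ungrouped edge carrying this split, and with four taxa there are no larger flattenings to fall back on. (Also, in the local model transfer occurs only between the two sibling edges within a single tier, not across adjacent tiers.)

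The missing idea is what the paper's $\pinch(T)$ construction does at the end of each chain of sibling pairs. The merged $S^2$-valued vertex $xy$ of the terminal pair is split back into two $S$-valued vertices $x$ and $y$. This is done by edges with the deterministic transition matrices $I\otimes\mathbf{1}$ and $\mathbf{1}\otimes I$. It is legitimate because transfer between the two lineages stops as soon as either one branches, since the next edges are no longer siblings of each other. Hence everything below $x$ depends on the pair only through $Y_x$.

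The edges along a sibling chain pass through unbranched vertices, so they all induce the same split as the terminal edge. Therefore every split of $T$, including those on the root edges, is represented either by an unmerged edge or by an edge $xy\to x$; this is the content of the paper's lemma $\Sigma(T)=\Sigma(\pinch(T))$. Subdividing that edge gives a root with $\kappa=\#S$ states, so $\rank\Flat_{A|B}(P)\le\kappa<\kappa^2$. With this in hand no careful choice of split is needed: any $A|B\in\Sigma(T)\setminus\Sigma(T')$ and $C|D\in\Sigma(T')\setminus\Sigma(T)$ work.
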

The proof depends on using the connection to graphical models to show that
flattening matrices associated to splits in the graph have low rank,
following a similar approach from many articles in algebraic approaches
to phylogenetics \cite{Tree_Symmetries, DBLP:journals/jcb/AllmanR06, allman2006phylogeneticidealsvarietiesgeneral, Chifman2014-kl}. 

The fact that flattening matrices associated to splits in the tree have low
rank leads to an approach for reconstructing trees using the singular value decomposition \cite{Chifman2014-kl}.  In Section \ref{sec:sims}, we test this approach 
on the local lateral gene transfer model on simulated data.  Somewhat surprisingly,
our simulations show that in the local model, increased amounts of 
lateral gene transfer increases the probability that the SVD-based approach
infers the correct underlying tree.


\section{Markov Models on Trees}\label{sec:Markov}


This section contains background on trees and Markov models on trees.
We will specify what it means to be a phylogenetic $X$-tree, 
and how evolutionary data is represented by this tree. We will use this information to describe the relationships between the taxa in the form of splits, and subsequently describe the structure of the tree.  One component that will be useful for Section \ref{sec:local}
is a non-standard description of the Markov process on an equidistant tree, via a Markov process on regions of the tree we call ``tiers''.    More detailed 
background information on phylogenetic trees appears in \cite{Phylogeny}.

\begin{defn}
    A \textit{directed graph} is a pair of sets $G=(V,E)$, with vertices $V$ and the edges a set of ordered pairs $E\subseteq {V\choose 2}$. A \textit{rooted binary tree} is a graph containing no cycles,  with a vertex $\rho$ called the ``root," with every edge directed away from $\rho$, and with every interior vertex having in-degree one and out-degree two (except for $\rho$, which has in-degree zero and out-degree two). The \textit{leaves} of the tree are the exterior vertices having in-degree one and out-degree zero.
\end{defn}

An \textit{unrooted} binary tree is similar to a rooted binary tree, 
but with undirected edges, no root, and all interior vertices have degree three while exterior vertices have degree one. A rooted tree can be converted 
to an unrooted tree by suppressing the root vertex. The root $\rho$ is deleted and the two edges incident with $\rho$, $(\rho \rightarrow x)$ and $(\rho \rightarrow y)$ are replaced with a single undirected edge, $(x,y)$.

The vertices of a rooted tree can be described by their relationship with the other vertices. More generally, this also holds for directed acyclic graphs.

\begin{defn}
        Let $G=(V,E)$ be a directed acyclic graph. The set of \textit{descendants} of a vertex $v$ is the set $\de(v)$ consisting of all $u$ such that there is a directed path from $v$ to $u$. The \textit{non-descendants} of $v$, denoted $\nd(v)$ consists of all vertices $w$ such that $w\notin \de(v)$.
    Similarly, the set of \textit{ancestors} of a vertex $u$ is the set $\an(v)$ consisting of all $v$ such that there is a directed path from $v$ to $u$.
    The set of \textit{parents} of $v$, denoted $\pa(v)$, consists of all vertices $u$ such that $u\rightarrow v$ is an edge of $G$. The set of \textit{children} of $u$, denoted $\ch(u)$, consists of all vertices $v$ such that $u\rightarrow v$ is an edge of $G$.
\end{defn}

A tree can be represented by its leaves and the relationships between disjoint sets of its leaves. Each edge of the tree represents a way to separate one set of leaves from another. This gives intuition on which taxa may be more closely related than others.

\begin{defn}
    A \textit{rooted phylogenetic X-tree} is a rooted tree $T$ such that $X$ is the set of leaves.
\end{defn}

\begin{defn}
    A \emph{split} $A|B$ of the set $X$ is a bipartition of $X$ into two nonempty subsets, $A$ and $B$.
    Given an $X$-tree $T$, if we delete a particular edge $e$, then we obtain an induced partition of $X$. We will refer to this split as the \textit{split of T} corresponding to the edge $e$. 
    An $X$-split obtained from an edge this way is also called a \emph{valid} split.   
    Let $\Sigma(T)$ denote the set of all valid splits of $T$. 
\end{defn}

  Note that all phylogenetic $X$-trees have the \textit{trivial splits} $\{x\}|X-\{x\}$ 
  for each element $x\in X$, obtained by deleting the edge incident with the leaf $x$.

A pair of $X$ splits $A|\bar{A}$ and $B|\bar{B}$ is called \emph{compatible} 
if at least one of the following intersections is empty: $A\cap B, A\cap \bar{B},\bar{A}\cap B,$ and $ \bar{A}\cap \bar{B} $.
A set $\Sigma$  of $X$-splits is 
 \textit{pairwise compatible}   if every pair of splits $A|\bar{A}, B|\bar{B} \in \Sigma$ is compatible.  Pairwise compatibility is the key property
 that determines if a set of splits comes from a tree.  

\begin{thm}\cite{Buneman_Splits,Phylogeny}
    \label{splits_theorem}
    Let $\Sigma$ be a set of $X$-splits.  There exists an unrooted phylogenetic $X$-tree
    such that $\Sigma = \Sigma(T)$ if and only if $\Sigma$ is
    pairwise compatible and contains the trivial splits.
    Furthermore, the tree $T$ is uniquely determined by $\Sigma(T)$; that is,
    if $\Sigma(T) = \Sigma(T')$ then $T$ and $T'$ are equivalent.  
\end{thm}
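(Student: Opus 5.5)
This is the classical Buneman splits-equivalence theorem. I plan to prove necessity directly and sufficiency by induction on $|\Sigma|$, constructing the tree.

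\emph{Necessity.} If $\Sigma=\Sigma(T)$, the trivial splits are present because each leaf $x$ has a pendant edge. For compatibility, take two edges $e\neq f$ of $T$. Removing $e$ leaves two components, and $f$ lies in exactly one of them. Removing $f$ as well splits that component in two, and one of these two pieces contains no endpoint of $e$. The leaves of that piece form one side $B$ of the split of $f$, and they lie entirely within one side $A$ of the split of $e$. Since the sides of $f$'s split are $B$ and its complement, we get $B\cap\bar A=\emptyset$ or $B\cap A=\emptyset$ (with $A$ chosen so that $B\subseteq A$). Hence the two splits are compatible.

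\emph{Sufficiency.} I argue by induction on the number of nontrivial splits in $\Sigma$. If there are none, $\Sigma$ consists only of the trivial splits, and the star tree realizes it. (For $|X|\le 3$ every split is trivial, so this also covers small cases.) Otherwise, choose a nontrivial split $A|\bar A\in\Sigma$. By the induction hypothesis, $\Sigma\setminus\{A|\bar A\}$ is realized by a tree $T_0$. The key step is to find a vertex $v$ of $T_0$ such that, for every edge $g$ at $v$, the leaves on the far side of $g$ lie entirely in $A$ or entirely in $\bar A$.

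To find $v$, orient each edge $g$ of $T_0$ toward the side of its split that meets both $A$ and $\bar A$. Compatibility guarantees that at most one side can do so. If neither side does, then the split of $g$ equals $A|\bar A$, contradicting $A|\bar A\notin\Sigma(T_0)$. So every edge receives a well-defined orientation. A finite tree with all edges oriented has a sink vertex, and a sink is exactly a vertex $v$ with the property above.

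Next, I split $v$ into two adjacent vertices $v_A$ and $v_{\bar A}$ joined by a new edge. Each old edge at $v$ is attached to $v_A$ or $v_{\bar A}$ according to whether its far side lies in $A$ or in $\bar A$. The new edge induces $A|\bar A$, every old edge keeps its split, and each new vertex has degree at least $3$ because $A|\bar A$ is nontrivial. (Both $A$ and $\bar A$ have at least two elements, so at least two old edges go to each new vertex.) This yields a tree $T$ with $\Sigma(T)=\Sigma$. One remark: the statement says ``phylogenetic $X$-tree'' without insisting on a binary tree, and this construction generally produces a non-binary tree, which is the appropriate reading.

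\emph{Uniqueness.} I again induct on the number of nontrivial splits. If $\Sigma(T)=\Sigma(T')$ and $T$ has an internal edge, contract the edge realizing some split $A|\bar A$ in both trees. The two contracted trees have equal split sets, so by induction they are isomorphic via an isomorphism fixing $X$. In each tree, the vertex that gets expanded is uniquely determined as the sink of the orientation defined above. The expansion itself is determined by the partition of the edges at that vertex into an $A$-group and an $\bar A$-group. Hence the isomorphism extends to $T\cong T'$.

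The main obstacle is the sink argument: showing that each edge's orientation is well defined using only pairwise compatibility, and that the chosen vertex splits cleanly.
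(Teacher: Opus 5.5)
The paper gives no proof of this statement; it is quoted from Buneman and from Semple--Steel, so there is no in-paper argument to compare with. Your outline is the standard constructive proof and is sound: necessity via nested components, sufficiency by inserting one nontrivial split at the sink of an orientation of $T_0$, and uniqueness by contracting and re-expanding. However, one inference is wrong as written.

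You claim that $v_A$ and $v_{\bar A}$ have degree at least $3$ because $|A|,|\bar A|\ge 2$. That does not follow, because a single edge at $v$ can have a far side containing many leaves, even all of $A$. What actually excludes a single old edge going to $v_A$ is this: its far side would then be exactly $A$, so that edge would already induce $A|\bar A$ in $T_0$, which your orientation step showed is impossible. Nontriviality of $A|\bar A$ is needed at a different point, which you skip. It ensures the sink is not a leaf: for a leaf $x$, the far side of its pendant edge is $X\setminus\{x\}$, which meets both $A$ and $\bar A$ when $|A|,|\bar A|\ge 2$. You need this so that the far sides of the edges at $v$ really partition $X$.

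The uniqueness part rests on two facts you use without proof.

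\emph{Uniqueness of the sink.} This is what forces the isomorphism $T/e\cong T'/e'$ to carry the contracted vertex to the contracted vertex. It follows from compatibility. Suppose a vertex $u$ had two edges oriented away from it. Their far sides are disjoint and each meets both $A$ and $\bar A$, so the near side of the first edge also meets both, contradicting compatibility. Hence every vertex has out-degree at most $1$. Since a tree has one fewer edge than vertices, exactly one vertex has out-degree $0$, which gives existence and uniqueness of the sink at once.

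\emph{Distinct edges induce distinct splits} in a tree without degree-$2$ vertices. You need this for ``the edge realizing $A|\bar A$'' to be well defined and for $\Sigma(T/e)=\Sigma(T)\setminus\{A|\bar A\}$. If $e\neq f$ induced the same split, the part of $T$ strictly between them would contain no leaf, which forces a vertex of degree $2$.

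Finally, for $|X|=2$ your base case should be a single edge rather than a ``star'' with a degree-$2$ centre. Your remark that the realizing tree is generally non-binary is correct and is the right reading of the statement.
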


Being able to identify the unrooted structure of a tree through $X$-splits is helpful when the data associated to the leaves allows us to group them into splits. This idea will be
used in Section \ref{sec:graphical} to prove that the unrooted tree structure is identifiable in the lateral gene transfer model.

It will be useful for us to assign lengths to the edges of the trees we work with. Edge lengths can be representative of time, or evolutionary distance, and are useful in many applications of phylogenetic trees. 
Let $l:  E \rightarrow \mathbb{R}_{> 0 }$ be a function that assigns a positive real number to each edge $e$. 
Define $d=d_{(T,l)}\colon X\times X \rightarrow \mathbb{R}$ by letting $d(x,y)$ be the sum of the edge lengths on the unique path between the vertices $x$ and $y$. 
Since $d$ satisfies the triangle inequality, $d$ is a metric, and since $d$ is represented by a tree, we will call it a \textit{tree metric}.

\begin{defn}
    An \textit{ultrametric} is any metric $d$ on $X$ which satisfies the property that for any three pairwise distinct elements 
    $x,y,z\in X$, $d(x,y)\leq \max (d(x,z),d(y,z))$.
\end{defn}

Ultrametrics are commonly used in the rooted setting by assigning lengths $l(e)>0$ to each edge in the tree so that the distance from the root $\rho$ to each leaf is the same.   Such an assignment is called \emph{equidistant}.  
This can be especially useful if the edge lengths represent time, and the same amount of time has passed from the ancestral species to the present-day species.
Throughout the rest of the paper, we will generally assume that 
edge lengths on a tree are ultrametrics.  This requirement will be an 
important simplifying component in our description of the lateral gene
transfer model in the next section.

Now that we have a framework for measuring time across a tree, 
we can begin to model how genes evolve over time. 
We will use a continuous Markov process to 
describe the process of sites changing over time in diverging lineages. 

\begin{defn}
    Let $Y_0,Y_1, \ldots ,Y_m$ be a sequence of random variables, each with values from a set of states $S$. 
    We say this sequence is a \textit{nonhomogeneous Markov chain} if for each $i>0$, $Y_{i+1}$ is conditionally independent from $Y_0, \ldots,Y_{i-1}$ given $Y_i$. This means that each random variable only relies on the variable just before, and disregards all other past variables. 
\end{defn}

To model the process of changing genes over time, we use a continuous time 
Markov process.
Let $\pi$ be a vector of length $r=|S|$ where $\pi_\alpha  =  \pp(Y_0 = \alpha)$. 
Let $P^{(i)}$ be an $r\times r$ \textit{ transition matrix} of conditional probabilities 
\[   P^{(i)}_{\alpha \beta} := \pp (Y_{i+1}=\beta |Y_i= \alpha).   \]
We will assume that the matrix $P$ and the vector $\pi$ are both strictly positive.  Also note that the rows of $P$ sum to one. 
 A \textit{rate matrix} is an $r \times r $ matrix $Q$ which has nonnegative off-diagonal entries and rows summing to zero, so the diagonal entries must be the negative sum of the rest of the row. The entry $Q_{\alpha \beta}$ represents the rate at which a site transitions from state $\alpha$ to state $\beta$.
 When we use rate matrices in a phylogenetic model, the matrix $Q$ may be different for each edge, and each $Q$ is in effect for some nonnegative amount of time $t\geq 0$.  
The transition matrix $P(t)$ is then the solution to the differential equation $dP(t)/dt = P(t)Q$ with the initial condition $P(0)=I$:
\[P(t) = \exp(Qt).\]
For a simple Markov chain, the transition matrix for each succeeding edge can be multiplied to obtain a transition matrix from the first variable to the last, across all edges in between. Summing over each of the possible states of the intermediary variables and assuming that the first variable $Y_0 = y_0$,
\[   
\pp(Y_m=y_m)=  \sum_{y_1\in S}\cdots \sum_{y_m \in S} \pp(Y_0=y_0) \prod_{1\leq i \leq m} \pp(Y_i=y_i |Y_{i-1} =y_{i-1} ).     
\]

For a tree with leaf set $X$, we sum over the possible states of all internal vertices, and multiply the probability of each state given the state of its parent vertex.  Recall that $\pa(i)$ denotes the parent of the vertex $i$.  For a tree with root $Y_0$, leaf set $\{Y_{m+1},...,Y_n\}$, and internal vertices 
$\{Y_1,\ldots ,Y_m\}$,

\[  \pp(Y_{m+1}=y_{m+1}, \ldots , Y_n=y_n)=  \sum_{y_0\in S}\cdots \sum_{y_m \in S} \pp(Y_0=y_0) \prod_{1\leq i \leq m} \pp(Y_i=y_i |Y_{\pa(i)} = y_{\pa(i)}) .  \]

\begin{ex}
   Let $T$ be the tree on four leaves as shown in Figure \ref{fig: ex no lgt}. 
   Each edge of the tree has an associated transition matrix $P_i$, and each vertex $i$ has an associated variable $Y_i$ with state $y_i$.

    \begin{align*}
        &\pp(Y_3=y_3, Y_4=y_4,Y_5=y_5 , Y_6=y_6)\\ &=  \sum_{y_0\in S}\sum_{y_1 \in S}\sum_{y_3\in S} \pp(Y_0=y_0) \pp(Y_1=y_1|Y_0=y_0)\pp(Y_2=y_2|Y_1=y_1)\pp(Y_3=y_3|Y_1=y_1)\\
        & \hspace{46mm}   \pp(Y_4=y_4|Y_2=y_2)\pp(Y_5=y_5|Y_2=y_2)\pp(Y_6=y_6|Y_0=y_0)
    \end{align*}

\end{ex}   
    
\begin{figure}[]
\centering
\resizebox{.5\textwidth}{!}{%
\begin{circuitikz}
\tikzstyle{every node}=[font=\huge]

\draw [ fill={rgb,255:red,0; green,0; blue,0} ] (-2.5,13.25) circle (0.25cm) node [left=7pt]{$0$};

\draw [line width=2pt, short] (-2.5,13.25) -- (-5,9.5);
\draw [line width=2pt, short] (-2.5,13.25) -- (0,9.5) ;

\draw [ fill={rgb,255:red,0; green,0; blue,0} ] (-5,9.5) circle (0.25cm) node [left=7pt]{$1$};
\draw [ fill={rgb,255:red,0; green,0; blue,0} , line width=0.2pt ] (-2.5,5.75) circle (0.25cm) node [left=7pt]{$2$};
\draw [line width=2pt, short] (-5,9.5) -- (-7.5,5.75);
\draw [line width=2pt, short] (-5,9.5) -- (-2.5,5.75);

\draw [line width=2pt, short] (-2.5,5.75) -- (-5,2);
\draw [line width=2pt, short] (-2.5,5.75) -- (0,2) ;

\node [font=\LARGE] at (-3.25,3.25) {};
\draw [line width=2pt, short] (0,9.5) -- (2.5,5.75);
\draw [line width=2pt, short] (2.5,5.75) -- (5,2);
\draw [line width=2pt, short] (-7.5,5.75) -- (-10,2);
\draw [ fill={rgb,255:red,0; green,0; blue,0} , line width=0.2pt ] (-10,2) circle (0.25cm) node [left=7pt]{$3$};
\draw [ fill={rgb,255:red,0; green,0; blue,0} , line width=0.2pt ] (0,2) circle (0.25cm) node [left=7pt]{$5$};
\draw [ fill={rgb,255:red,0; green,0; blue,0} , line width=0.2pt ] (-5,2) circle (0.25cm) node [left=7pt]{$4$};
\draw [ fill={rgb,255:red,0; green,0; blue,0} , line width=0.2pt ] (5,2) circle (0.25cm) node [left=7pt]{$6$};

\end{circuitikz}
}%
\caption{A rooted tree on four leaves.}
\label{fig: ex no lgt}
\end{figure}

We can also think about grouping the lineages together, so that instead of having an individual matrix for each edge of the tree, we can use a single matrix to represent all lineages for a certain length of time. Then, we can multiply these matrices in chronological order to obtain the probabilities of all leaves. To do this, we first need to add extra nodes to our tree. For every vertex representing a divergence of lineages,  we will add nodes on every other contemporaneous edge of the tree so that all edges which exist at the same time have the same length, as in Figure \ref{fig:levels}.

\begin{ex}
Consider the tree $T$ in Figure \ref{fig:levels}.  Each edge of the tree has an associated transition matrix $P_i$, and each vertex $i$ has an associated variable $Y_i$ with state $y_i$.
\begin{align*}
        &\pp(Y_6=y_6, Y_7=y_7,Y_8=y_8 , Y_9=y_9)\\ &=  \sum_{y_0\in S}\sum_{y_1 \in S}\sum_{y_2\in S}\sum_{y_3\in S}\sum_{y_4\in S}\sum_{y_5\in S} \pp(Y_0=y_0) \pp(Y_1=y_1|Y_0=y_0)\pp(Y_2=y_2|Y_0=y_0)\pp(Y_3=y_3|Y_1=y_1)\\
        & \hspace{68mm}   \pp(Y_4=y_4|Y_1=y_1)\pp(Y_5=y_5|Y_2=y_2)\pp(Y_6=y_6|Y_3=y_3)\\
        & \hspace{68mm}   \pp(Y_7=y_7|Y_4=y_4)\pp(Y_8=y_8|Y_4=y_4)\pp(Y_9=y_9|Y_5=y_5)
    \end{align*}

\end{ex}
    
\begin{figure}[]
\centering
\resizebox{.5\textwidth}{!}{%
\begin{circuitikz}
\tikzstyle{every node}=[font=\huge]

\draw [ fill={rgb,255:red,0; green,0; blue,0} ] (-2.5,13.25) circle (0.25cm) node [left=7pt]{$0$};

\draw [line width=2pt, short] (-2.5,13.25) -- (-5,9.5);
\draw [line width=2pt, short] (-2.5,13.25) -- (0,9.5) ;

\draw [ fill={rgb,255:red,0; green,0; blue,0} ] (-5,9.5) circle (0.25cm) node [left=7pt]{$1$};
\draw [ fill={rgb,255:red,0; green,0; blue,0} , line width=0.2pt ] (-2.5,5.75) circle (0.25cm) node [left=7pt]{$4$};

\draw [line width=2pt, short] (-5,9.5) -- (-7.5,5.75);
\draw [line width=2pt, short] (-5,9.5) -- (-2.5,5.75);

\draw [line width=2pt, short] (-2.5,5.75) -- (-5,2);
\draw [line width=2pt, short] (-2.5,5.75) -- (0,2);

\draw [line width=2pt, short] (0,9.5) -- (2.5,5.75);
\draw [line width=2pt, short] (-7.5,5.75) -- (-10,2);
\draw [line width=2pt, short] (2.5,5.75) -- (5,2);

\draw [ fill={rgb,255:red,255; green,64; blue,255} , line width=0.2pt ] (0,9.5) circle (0.25cm) node [left=7pt]{$2$};
\draw [ fill={rgb,255:red,255; green,64; blue,255} , line width=0.2pt ] (-7.5,5.75) circle (0.25cm) node [left=7pt]{$3$};
\draw [ fill={rgb,255:red,255; green,64; blue,255} , line width=0.2pt ] (2.5,5.75) circle (0.25cm) node [left=7pt]{$5$};

\draw [ fill={rgb,255:red,0; green,0; blue,0} , line width=0.2pt ] (-10,2) circle (0.25cm) node [left=7pt]{$6$};
\draw [ fill={rgb,255:red,0; green,0; blue,0} , line width=0.2pt ] (0,2) circle (0.25cm) node [left=7pt]{$8$};
\draw [ fill={rgb,255:red,0; green,0; blue,0} , line width=0.2pt ] (-5,2) circle (0.25cm) node [left=7pt]{$7$};
\draw [ fill={rgb,255:red,0; green,0; blue,0} , line width=0.2pt ] (5,2) circle (0.25cm) node [left=7pt]{$9$};

\draw [dashed] (-14,9.5) -- (5,9.5);
\draw [dashed] (-14,5.75) -- (5,5.75);
\node [font=\LARGE] at (-12,11.75) {tier 1};
\node [font=\LARGE] at (-12,7.75) {tier 2};
\node [font=\LARGE] at (-12,3.75) {tier 3};

\end{circuitikz}
}%
\caption{The pink nodes $Y_2,Y_3,$ and $Y_5$ are the added nodes.  The dashed lines mark the boundary between tiers.}
\label{fig:levels}
\end{figure}

After adding nodes in this way, we can think of the tree being separated into different tiers, with each edge in tier $i$ having length $t_i$.

\begin{defn}
    A \textit{tiered X-tree} $T$ is a rooted binary phylogenetic $X$-tree such that
    if we assign a length of $1$ to every edge, then the resulting tree metric
    is an ultrametric.  
    A \textit{tier} in $T$ is a set of vertices all with the same distance $\delta$ from the root along with the edges incident with and directed toward these vertices.
\end{defn}

To construct a tiered $X$-tree from a given rooted tree, we can add extra
nodes of degree $2$ that subdivide the edges.  Note that if we have a tiered $X$-tree
and an ultrametric $d$ associated to that tree, we will assume that the edge lengths
in each tier are all the same.  

To achieve our goal of having a unifying $Q$ matrix for each tier, we will create a rate matrix describing the transition between sequences of states rather than individual states. Let $q^{(a)}$ be the transition-state matrix for the edge representing species $a$. Let $\mathrm{d}_H(i,j)$ be the Hamming distance between the sequences $i = (i_1, \ldots, i_n) $ and $j = (j_1, \ldots, j_n)$, where $i, j \in S^n$.

\[     Q(i;j) = \begin{cases}
    q^{(a)}_{i_a, j_a} &\mathrm{d}_H(i,j)=1, i_a \neq j_a \\
    -\sum_{k \in S^n:k \neq i }{Q(i;k)} & \mathrm{d}_H(i,j)=0\\
    0 &\mathrm{d}_H(i,j)>1
\end{cases}   \]
Equivalently, we can construct transition-state matrices as a Kronecker sum of matrices corresponding to each edge. 

Let $A$ be an $m \times m$ matrix, let $B$ be an $n \times n$ matrix, and let
$I_m$ and $I_n$ denote $m \times m$ and $n \times n$ identity matrices, respectively.
Let $\otimes$ denote the Kronecker product, 
and let $\oplus$ denote the Kronecker sum, defined by $A\oplus B = (A\otimes I_n)+(I_m \otimes B)$; that is, 
 \begin{align*}
    (A\oplus B)_{(nr+v,ns+w)}&=(A\otimes I_n)_{(nr+v,ns+w)}+(I_m\otimes B)_{(nr+v,ns+w)}\\
    &= A_{(r,s)}(I_{n})_{(v,w)}+(I_{m})_{(r,s)}B_{(v,w)}.
\end{align*}     

\begin{thm}
    Let $T$ be a tiered $X$-tree. The Kronecker sum of the rate matrices for each edge within a single tier of $T$ is itself a valid rate matrix, representing the changes between sequences of states.
\end{thm}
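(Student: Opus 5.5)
We need to show that $Q = q^{(1)}\oplus q^{(2)}\oplus\cdots\oplus q^{(k)}$, for the rate matrices $q^{(a)}$ ($r\times r$) on the $k$ edges of a tier, is a rate matrix. That is, $Q$ must have nonnegative off-diagonal entries and zero row sums. We also want $Q$ to agree with the entrywise definition of $Q(i;j)$ given above, which justifies reading it as the rate of change between sequences of states.

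The plan is induction on $k$, using the two-factor formula $A\oplus B = (A\otimes I_n)+(I_m\otimes B)$. The Kronecker sum is associative, since $(A\oplus B)\oplus C = A\otimes I\otimes I + I\otimes B\otimes I + I\otimes I\otimes C = A\oplus(B\oplus C)$ by associativity of $\otimes$. So the iterated sum is well defined and equals
\[ Q=\sum_{a=1}^{k} I_r\otimes\cdots\otimes I_r\otimes q^{(a)}\otimes I_r\otimes\cdots\otimes I_r ,\]
with $q^{(a)}$ in the $a$-th slot.

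The core step is therefore the two-factor case: if $A$ ($m\times m$) and $B$ ($n\times n$) are rate matrices, then so is $A\oplus B$. We index rows and columns by pairs $(r,v)$, $(s,w)$, using the entry formula $(A\oplus B)_{(r,v),(s,w)} = A_{rs}\delta_{vw}+\delta_{rs}B_{vw}$ displayed above.

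For an off-diagonal entry, $(r,v)\neq(s,w)$, there are three cases.
\begin{itemize}
\item If $r\neq s$ and $v=w$, the entry is $A_{rs}\ge 0$.
\item If $r=s$ and $v\neq w$, the entry is $B_{vw}\ge0$.
\item If $r\neq s$ and $v\neq w$, the entry is $0$.
\end{itemize}
So every off-diagonal entry is nonnegative.

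For the row sums, fix a row $(r,v)$. Then
\[\sum_{s,w} A_{rs}\delta_{vw}+\delta_{rs}B_{vw} = \sum_s A_{rs}+\sum_w B_{vw}=0+0=0.\]
Equivalently, $(A\oplus B)\mathbf{1} = (A\mathbf 1)\otimes \mathbf 1 + \mathbf 1\otimes (B\mathbf 1)=0$. The diagonal entry is then forced to be minus the sum of the off-diagonal entries in its row. Induction on the number of edges in the tier finishes the argument.

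For the interpretive claim, the same case analysis applied to the $k$-fold sum gives the entrywise description directly. The entry in position $(i;j)$, with $i,j\in S^k$, equals $\sum_a q^{(a)}_{i_a j_a}\prod_{b\neq a}\delta_{i_b j_b}$. This is zero when $\mathrm d_H(i,j)>1$, equals $q^{(a)}_{i_a j_a}$ when $i$ and $j$ differ exactly in coordinate $a$, and equals $\sum_a q^{(a)}_{i_a i_a}$ on the diagonal. That diagonal value is the negative row sum, which matches the definition of $Q(i;j)$.

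The main obstacle is only bookkeeping. We must fix an ordering of the edges in the tier, use a lexicographic indexing of $S^k$ that is compatible with the Kronecker product convention, and confirm that associativity lets us drop parentheses. Everything else is a direct calculation.
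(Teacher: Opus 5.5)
Your proposal is correct and follows essentially the same route as the paper: it reduces to the two-factor case via the entry formula $A_{rs}\delta_{vw}+\delta_{rs}B_{vw}$, runs the same case analysis on which coordinate changes, and extends to the whole tier using associativity of the Kronecker sum. Your direct row-sum check $(A\oplus B)\mathbf{1}=(A\mathbf{1})\otimes\mathbf{1}+\mathbf{1}\otimes(B\mathbf{1})=0$ handles the diagonal a bit more cleanly than the paper, which starts from the defining identity for the diagonal entry and then computes that it equals $A_{r,r}+B_{v,v}$.
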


\begin{proof}
    We will prove that the Kronecker sum of two rate matrices of arbitrary size satisfies the requirements of a rate matrix.  The general result follows by induction.  Let $A$ be an arbitrary $m\times m$ matrix and $B$ an arbitrary  $n\times n$ matrix, both satisfying the conditions of $Q$.

    Relabeling the indices as sequences:
\begin{align*}
   & A_{(r,s)}(I_{n})_{(v,w)}+(I_{m})_{(r,s)}B_{(v,w)} \\
   = &A_{(r_1,...,r_m,s_1,...,s_m)}(I_{n})_{(v_1,...,v_n,w_1,...,w_n)}+(I_{m})_{(r_1,...,r_m,s_1,...,s_m)}B_{(v_1,...,v_n,w_1,...,w_n)}\\
   = &A_{(r_1,...,r_m,s_1,...,s_m)}\delta_{((v_1,...,v_n),(w_1,...,w_n))}+\delta_{((r_1,...,r_m),(s_1,...,s_m))}B_{(v_1,...,v_n,w_1,...,w_n)}
\end{align*}
where $\delta_{i,j}$ is the Kronecker delta.

If $(v_1,...,v_n)=(w_1,...,w_n) $ and $(r_1,...,r_m)\neq (s_1,...,s_m)$, then in the combined sequence transitioning from $(r_1,...,r_m,v_1,...,v_n)$ to $(s_1,...,s_m,w_1,...,w_n)$ there is exactly one state change, and $(A\oplus B)_{nr+v,ns+w}=A_{r,s}$. Similarly, if $(v_1,...,v_n)\neq(w_1,...,w_n) $ and $(r_1,...,r_m)=(s_1,...,s_m)$, then $(A\oplus B)_{nr+v,ns+w}=B_{v,w}$.  If $(v_1,...,v_n)\neq(w_1,...,w_n) $ and $(r_1,...,r_m)\neq(s_1,...,s_m)$, then there are at least two state changes, and thus $(A\oplus B)_{nr+v,ns+w}=0$. If $(v_1,...,v_n)=(w_1,...,w_n) $ and $(r_1,...,r_m)=(s_1,...,s_m)$,  we have that
\begin{align*}
    (A\oplus B)_{nr+v,nr+v} &= -\sum_{j:j\neq nr+v}{(A\oplus B)_{nr+v,j}}\\
    &= -\sum_{j:j\neq nr+v}{(A\otimes I_n)_{nr+v,j}+(I_m\otimes B)_{nr+v,j}}\\
    &=-\sum_{j:j\neq r}{A_{r,j}}-\sum_{j:j\neq v}{B_{v,j}}\\
    &=A_{r,r}+B_{v,v}
\end{align*}
 Thus, $(A\oplus B)$ also satisfies the conditions of being a $Q$ matrix, and accurately represents the structure of the tree. Since the Kronecker sum is associative, the sum of two matrices can be extended to the sum of arbitrarily many matrices. Therefore the Kronecker sum of every rate matrix within a tier is also a valid rate matrix.
\end{proof}

\begin{ex} \label{Q1}
    Consider the tiered tree $T$ in Figure \ref{fig:levels}. The first tier has edges with transition matrices $P_1$ and $P_2$. Suppose the corresponding rate matrices are 
    \[
    q_1=\begin{bmatrix}
        -a&a\\b&-b
    \end{bmatrix},
   q_2= \begin{bmatrix}
        -c&c\\d&-d
    \end{bmatrix}.
    \]
    Then the rate matrix for the tier as a whole is 
    \begin{align*}
    Q_1 &= \begin{bmatrix}
        -a&a\\b&-b
    \end{bmatrix} \oplus \begin{bmatrix}
        -c&c\\d&-d
    \end{bmatrix}\\
    &= \begin{bmatrix}
        -a&a\\b&-b
    \end{bmatrix}\otimes \begin{bmatrix}
        1&0\\0&1
    \end{bmatrix} + \begin{bmatrix}
        1&0\\0&1
    \end{bmatrix}\otimes \begin{bmatrix}
        -c&c\\d&-d
    \end{bmatrix}\\
    &= \begin{bmatrix}
        -a&0&a&0\\0&-a&0&a\\b&0&-b&0\\0&b&0&-b
    \end{bmatrix}+\begin{bmatrix}
        -c&c&0&0\\d&-d&0&0\\0&0&-c&c\\0&0&d&-d
    \end{bmatrix}\\
    &=\begin{bmatrix}
        -a-c&c&a&0\\d&-a-d&0&a\\b&0&-b-c&c\\0&b&d&-b-d
    \end{bmatrix}
    \end{align*}

When we assign sequences to the rows and columns, we can see that $a$ and $b$ represent the
infinitesimal rate of change in the first position of the sequences, 
$c$ and $d$ represent the infinitesimal rate of change in the second position, and there is a zero in the entries corresponding to changes in both position.

        \[ Q_1=
\begin{blockarray}{ccccc}
        & 00 & 01 & 10 & 11 \\
      \begin{block}{c[cccc]}
        00 & -a-c & c & a & 0 \\
        01 & d & -a-d & 0 & a \\
        10 & b & 0 & -b-c & c \\
        11 & 0 & b & d & -b-d \\
      \end{block}
    \end{blockarray}
\]

\end{ex}

Now that each tier has its own rate matrix, we can construct transition matrices for each tier.
The Kronecker sum and product are related by the following equation \cite{kronecker_maatrix_book}, mimicking exponentiation properties of numbers:
\[
\exp( A \oplus B)  =  \exp(A) \otimes \exp(B). 
\] 
So, letting $Q_l$ be the rate matrix for each tier $l$, the transition matrix is $P_l(t)=\exp{(Q_l t)}$. Suppose $Q_l=q_1\oplus q_2$ with $\exp(q_1 t)=P_1(t)$ and $\exp(q_2 t)=P_2(t)$. Due to the bilinearity of the Kronecker product and the exponentiation property above,
\begin{align*}
    \exp{(Q_l t)} &= \exp{([q_1 \oplus q_2]t)} \\
    &=\exp{([q_1\otimes I_n]t+[I_m\otimes q_2]t)} \\
    &=\exp{([q_1 t]\otimes I_n + I_m\otimes [q_2 t])} \\
    &= \exp{([q_1 t]\oplus [q_2 t])} \\
    &=\exp{(q_1 t)}\otimes\exp{(q_2 t)}\\
    &= P_1(t)\otimes P_2(t),
\end{align*}

where $I_n$ is an identity matrix the size of $q_2$ and $I_m$ is an identity matrix the size of $q_1$.

Since each tier now has a transition matrix with size dependent of the number of edges in that tier, we cannot multiply the  matrices across tiers directly. 
At any speciation event, we increase the length of the state sequence by one. To accomplish this, we take the state of the vertex with two children and duplicate it.  This makes sense because at that moment there are two identical copies which then evolve separately in the Markov process for the
next tear.  For example, in Figure \ref{fig:levels}, if $Y_1$ is in state $a$ and $Y_2$ is in state $b$, then the sequence at the end of the first tier is $ab$, and the sequence at the start of the second tier is $aab$. Therefore, we only take the rows of the probability matrix corresponding to sequences where the diverging lineages have the same state.

\begin{defn}
Let $T$ be a tiered X-tree. We call two edges of $T$ \textit{sibling edges} if they are in the same tier and share their most recent ancestor with out-degree two. Similarly, two vertices are called \textit{sibling vertices} if they are in the same tier and they share their most recent ancestor with out-degree two. 
We call the set of all pairs of sibling vertices and edges $\sib(T)$. 
\end{defn}

For each pair of sibling edges, we will combine the rate matrices for each edge using the Kronecker sum. After summing the rate matrices for the sibling edges, then exponentiating them, we will truncate the matrix in order to have a transition matrix of the appropriate dimension. Then we can proceed to multiply this transition matrix with those of the other edges in the tier.

\begin{defn}
    Let $T$ be a tiered X-tree with $r$ possible states on the random variables, and let   $a\rightarrow b$ and $a\rightarrow c$
    be two sibling edges. 
    Let $P_{b}$ and $P_{c}$ be their respective transition matrices, and let $P_{bc}=P_b\otimes P_c$. We will index $P_{bc}$ with ordered pairs of states. The \textit{truncation} of $P_{bc}$, denoted $\trunc(P_{bc})$ is the matrix containing of only the rows of $P_{bc}$ indexed by a pair with both states equal. 
    
\end{defn}

\begin{ex}
   In Example \ref{Q1}, we calculated the rate matrix for the first tier of the tree in Figure \ref{fig:levels}. Suppose the possible states are $0$ and $1$, and the transition matrix for the same tier is the following: 
    \[ \exp{(Q_1t)}=
\begin{blockarray}{ccccc}
        & 00 & 01 & 10 & 11 \\
      \begin{block}{c[cccc]}
        00 & a & b & c & d \\
        01 & e & f & g & h \\
        10 & i & j & k & l \\
        11 & m & n & o & p \\
      \end{block}
    \end{blockarray},
\]

with each row labeled by the sequence at the beginning of the tier and each column labeled by the sequence at the end of the tier. To construct $P$, we will take only the first and fourth rows of the matrix, since those rows represent the state at $Y_0$ having been duplicated:
 \[ \trunc(\exp{(Q_1t)})=P=
\begin{blockarray}{ccccc}
        & 00 & 01 & 10 & 11 \\
      \begin{block}{c[cccc]}
        00 & a & b & c & d \\
        11 & m & n & o & p \\
      \end{block}
    \end{blockarray}   .
\]
\end{ex}

Once the truncated transition matrices for the sibling edges are incorporated into the larger tier transition matrix, we multiply the transition matrix for each tier in chronological order to get the probabilities of the states on the leaves. Let $T$ be a tiered $X$-tree, 
with root $Y_0$ and leaf set $\{Y_{m+1},\ldots,Y_n\}$. 
We will take the first two internal vertices at the conclusion of the first tier, $Y_1$ and $Y_2$, and combine them into a random vector $X_1 = (Y_1, Y_2)$, 
describing the  states across the nodes. 
We repeat this with each tier, so for a tree with $l$ tiers, the interior variables become $X_1, \ldots,X_{l-1}$ with the leaves grouped in the variable $X_l$. Then the probability on the leaves is

\[
    \pp(X_l=x_l)= \sum_{x_0\in S}\cdots \sum_{x_{l-1} \in S^{l}} \pp(X_0=x_0)\prod_{1\leq i \leq l} \pp(X_i=x_i |X_{i-1} = x_{i-1})   
\]
where the lowercase $x_0, \ldots ,x_l$ represent sequences of states of their respective variables.

\begin{ex}
    \label{ex: one distribution}
    We return to the tree on four leaves in Figure \ref{fig:kronecker}.
    Assume that all of the random variables $Y_i$ are binary. 
    In this tree, 
    we have $X_1  = (Y_1, Y_2)$, 
     $X_2 = (Y_3, Y_4, Y_5)$, and  $X_5 = (Y_6, Y_7, Y_8, Y_9)$. 
     Let $\pi$ be the $1 \times 2$ vector containing the probability of states at $X_0 = Y_0$. 
     Then the probability distribution of states on the leaves is 
    \[    \pi  (\trunc(P_1\otimes P_2))(\trunc(P_3 \otimes P_4) \otimes P_5 )(P_6 \otimes \trunc(P_7 \otimes P_8 )\otimes P_9).    \]
    This product is a $1 \times 2$ matrix, multiplied by a $2\times 4$ matrix, a $4 \times 8$ matrix, and an $8 \times 16$ matrix, so the product is well defined and is a vector of length 16. Using these variables to represent sequences of states, the probability of the leaves having the sequence $x_3$ is 
    
    \begin{align*}
    \pp(X_3=x_3)= \sum_{x_0\in S}\sum_{x_1\in S^2} \sum_{x_2 \in S^3} \pp(X_0=x_0) \pp(X_1=x_1|X_0 = x_0)\pp(X_2=x_2|X_1 = x_1) \pp(X_3=x_3|X_2=x_2).     
\end{align*}
\end{ex}

\begin{figure}[]
\centering
\resizebox{.5\textwidth}{!}{%
\begin{circuitikz}
\tikzstyle{every node}=[font=\huge]

\draw [ fill={rgb,255:red,0; green,0; blue,0} ] (-2.5,13.25) circle (0.25cm) node [left=7pt]{$0$};

\draw [line width=2pt, short] (-2.5,13.25) -- (-5,9.5) node [midway, fill=white] {$P_1$};
\draw [line width=2pt, short] (-2.5,13.25) -- (0,9.5) node [midway, fill=white] {$P_2$};

\draw [ fill={rgb,255:red,0; green,0; blue,0} ] (-5,9.5) circle (0.25cm) node [left=7pt]{$1$};
\draw [ fill={rgb,255:red,0; green,0; blue,0} , line width=0.2pt ] (-2.5,5.75) circle (0.25cm) node [left=7pt]{$4$};

\draw [line width=2pt, short] (-5,9.5) -- (-7.5,5.75) node [midway, fill=white] {$P_3$};
\draw [line width=2pt, short] (-5,9.5) -- (-2.5,5.75) node [midway, fill=white] {$P_4$};

\draw [line width=2pt, short] (-2.5,5.75) -- (-5,2) node [midway, fill=white] {$P_7$};
\draw [line width=2pt, short] (-2.5,5.75) -- (0,2) node [midway, fill=white] {$P_8$};

\draw [line width=2pt, short] (0,9.5) -- (2.5,5.75) node [midway, fill=white] {$P_5$};
\draw [line width=2pt, short] (-7.5,5.75) -- (-10,2) node [midway, fill=white] {$P_6$};
\draw [line width=2pt, short] (2.5,5.75) -- (5,2) node [midway, fill=white] {$P_9$};

\draw [ fill={rgb,255:red,0; green,0; blue,0} , line width=0.2pt ] (0,9.5) circle (0.25cm) node [left=7pt]{$2$};
\draw [ fill={rgb,255:red,0; green,0; blue,0} , line width=0.2pt ] (-7.5,5.75) circle (0.25cm) node [left=7pt]{$3$};
\draw [ fill={rgb,255:red,0; green,0; blue,0} , line width=0.2pt ] (2.5,5.75) circle (0.25cm) node [left=7pt]{$5$};

\draw [ fill={rgb,255:red,0; green,0; blue,0} , line width=0.2pt ] (-10,2) circle (0.25cm) node [left=7pt]{$6$};
\draw [ fill={rgb,255:red,0; green,0; blue,0} , line width=0.2pt ] (0,2) circle (0.25cm) node [left=7pt]{$8$};
\draw [ fill={rgb,255:red,0; green,0; blue,0} , line width=0.2pt ] (-5,2) circle (0.25cm) node [left=7pt]{$7$};
\draw [ fill={rgb,255:red,0; green,0; blue,0} , line width=0.2pt ] (5,2) circle (0.25cm) node [left=7pt]{$9$};

\draw [dashed] (-14,9.5) -- (5,9.5);
\draw [dashed] (-14,5.75) -- (5,5.75);
\node [font=\LARGE] at (-12,11.75) {tier 1};
\node [font=\LARGE] at (-12,7.75) {tier 2};
\node [font=\LARGE] at (-12,3.75) {tier 3};

\end{circuitikz}
}%
\caption{Each edge of the tree has an associated transition matrix, $P_i$. The dashed lines mark the boundary between tiers.}
\label{fig:kronecker}
\end{figure}


\section{The Local Gene Transfer Model}  \label{sec:local}

In certain biological contexts it is possible for genetic information to be transferred ``laterally" between lineages. 
While mutation occurs within a single lineage, lateral gene transfer, or LGT, describes when a living organism  transfers genetic data directly to another organism \cite{Sieber2017-vl}.
Several authors have described models of random LGT in which the number  of LGT events is Poisson distributed with some rate $\lambda$ \cite{Likelihood_framework_hgt}. 
In these models we assume that when an LGT event occurs, 
one lineage sends a copy of a gene to replace a gene in a different lineage, while leaving the copy of the gene in the original lineage intact.
Furthermore, only one LGT event is allowed to occur at one time. We can visualize this process on the graph of a tree $T$ by drawing a directed edge $\sigma$ between two contemporaneous points $v$ and $v'$. 
We will create a sequence of these LGT events, starting with the most recent and ending with the event closest to the root: 
$\sigma_0,\sigma_1,\ldots,\sigma_k$ where $\sigma_i$ is the edge $(v_i\rightarrow v_i')$. Now by reversing along the direction of the edges from each leaf to the root, we can construct a gene tree. The gene tree displays the evolution of the gene as opposed to the evolution of the species. For each $\sigma_i$ delete the section of the edge immediately above $v_i'$ and take the minimal connected subgraph to be the gene tree. In this way, given a species tree, we can generate random gene trees, and use them as a data pool to investigate identifiability of the species tree.

In this paper, since we are observing the evolution of gene sequences on trees, we will use a slightly different viewpoint of generating gene trees. We will keep the same assumptions that no two transfer events occur at the same time, and that one lineage will ``donate" a copy of its own gene to another lineage. Transfer events will occur at a rate $\lambda$, but $\lambda$ does not need to be constant across
the tree. We will use $\lambda(a,b)$, which is a function of $a$ and $b$, the origin species and destination species. 
We will call this model the Lateral Gene Transfer Model. 

\begin{ex}
    Figure \ref{fig:lgt close up} shows a picture of five edges of a tree in the same tier. There is lateral gene transfer between lineages 1 and 2, represented by the arrow labeled $\lambda(1,2)$. This means that the state $a$ from lineage $1$ was transferred onto lineage $2$, overwriting the previous state $c$. 
\end{ex}

\begin{figure}[]
\centering
\resizebox{.8\textwidth}{!}{%
\begin{circuitikz}
\tikzstyle{every node}=[font=\huge]

\draw [ line width=2pt](-8.75,9.5) to[short] (-8.75,5.75);
\draw [ line width=2pt](-5,9.5) to[short] (-5,5.75);
\draw [ line width=2pt](-1.25,9.5) to[short] (-1.25,5.75);
\draw [ line width=2pt](2.5,9.5) to[short] (2.5,5.75);
\draw [line width=2pt, dashed] (-8.75,10.25) -- (-8.75,9.5);
\draw [line width=2pt, dashed] (-5,10.25) -- (-5,9.5);
\draw [line width=2pt, dashed] (-1.25,10.25) -- (-1.25,9.5);
\draw [line width=2pt, dashed] (2.5,10.25) -- (2.5,9.5);
\draw [line width=2pt, dashed] (-8.75,5.75) -- (-8.75,5);
\draw [line width=2pt, dashed] (-5,5.75) -- (-5,5);
\draw [line width=2pt, dashed] (-1.25,5.75) -- (-1.25,5);
\draw [line width=2pt, dashed] (2.5,5.75) -- (2.5,5);
\node [font=\huge] at (-8.75,11.25) {1};
\node [font=\huge] at (-5,11.25) {2};
\node [font=\huge] at (-1.25,11.25) {3};
\node [font=\huge] at (2.5,11.25) {4};
\node [font=\LARGE] at (-9,9) {a};
\node [font=\LARGE] at (-5.25,9) {c};
\node [font=\LARGE] at (-1.5,9) {t};
\node [font=\LARGE] at (2.25,9) {g};
\draw [line width=2pt, ->, >=Stealth] (-8.5,8.25) -- (-5.25,8.25);
\node [font=\LARGE] at (-5.25,7.75) {a};
\draw [line width=2pt, ->, >=Stealth] (2.75,8.25) -- (6,8.25);
\draw [line width=2pt, ->, >=Stealth] (-1.5,7) -- (-4.75,7);
\node [font=\LARGE] at (-1.5,7.75) {t};
\node [font=\LARGE] at (-5.25,6.5) {t};
\node [font=\LARGE] at (-9,7.75) {a};
\node [font=\LARGE] at (-9,6.5) {a};
\node [font=\LARGE] at (-1.5,6.5) {t};
\node [font=\LARGE] at (2.25,7.75) {g};
\node [font=\LARGE] at (2.25,6.5) {g};
\node [font=\Large] at (-3.5,6.5) {$\lambda$(3,2)};
\node [font=\Large] at (-7.25,7.75) {$\lambda$(1,2)};
\node [font=\Large] at (4,7.75) {$\lambda$(4,5)};
\draw [ line width=2pt](6.25,9.5) to[short] (6.25,5.75);
\draw [line width=2pt, dashed] (6.25,10.25) -- (6.25,9.5);
\draw [line width=2pt, dashed] (6.25,5.75) -- (6.25,5);
\node [font=\huge] at (6.25,11.25) {5};
\node [font=\LARGE] at (6,9) {t};
\node [font=\LARGE] at (6,7.75) {g};
\node [font=\LARGE] at (6,6.5) {g};
\end{circuitikz}
}%

\caption{In this illustration there is lateral gene transfer 
between lineages 1 and 2, such that the state \textbf{a} on lineage 1 overwrites the state \textbf{c} on lineage 2. Similarly, there is LGT between lineages 4 and 3, as well as between 4 and 5.}
\label{fig:lgt close up}
\end{figure}

 Under the Lateral Gene Transfer Model we will specify the possible LGT events with a graph describing the relationships between edges. 

\begin{defn}
    Let $G_i=(V,E)$ be a directed graph with vertex set $V$ being the set of species at tier $i$. 
    We will call this graph the \textit{gene transfer graph} associated to a tree $T$ on tier $i$. The edge set $E$ consists of ordered pairs of species $(u,v)$ such that the state of $u$ can overwrite the state of $v$ in a lateral gene transfer event, for any pair of states. 
\end{defn}

    For each edge of the graph $G_i$, let  $\lambda(u,v)\in \rr_{>0}$ be the parameter which describes the rate at which 
    lineage $u$ overwrites the state on lineage $v$ with the state on $u$. 
    Let $\Lambda_i$ be the set of all $\lambda$ parameters for the graph $G_i$.

\begin{ex}
    Figure \ref{fig:gene transfer graph} shows a possible gene transfer graph for the tier shown in Figure \ref{fig:lgt close up}.  Lineage $1$ is able to overwrite the states on either $2$ or $3$, but $2$ and $3$ can only overwrite the states on each other. 
\end{ex}

\begin{figure}[]
\centering
\resizebox{.8\textwidth}{!}{%
\begin{circuitikz}
\tikzstyle{every node}=[font=\LARGE]

\node [font=\huge] at (-8.75,2.25) {1
};
\node [font=\huge] at (-5,2.25) {2
};
\node [font=\Huge] at (-1.25,2.25) {3
};
\node [font=\huge] at (2.5,2.25) {4
};
\draw [line width=2pt, ->, >=Stealth] (-8.25,3.25) .. controls (-7.5,3.75) and (-6.25,3.75) .. (-5.5,3.25) ;
\node [font=\LARGE] at (-3,4) {$\lambda$(2,3)};
\node [font=\LARGE] at (-7,4) {$\lambda$(1,2)};
\node [font=\LARGE] at (4.5,4) {$\lambda$(4,5)};
\node [font=\huge] at (6.25,2.25) {5};
\draw [ fill={rgb,255:red,0; green,0; blue,0} , line width=2pt ] (-8.75,3.25) circle (0.25cm);
\node [font=\huge] at (-8.5,3) {};
\node [font=\huge] at (-8.5,3) {};
\draw [ fill={rgb,255:red,0; green,0; blue,0} , line width=2pt ] (-5,3.25) circle (0.25cm);
\draw [ fill={rgb,255:red,0; green,0; blue,0} , line width=2pt ] (-1.25,3.25) circle (0.25cm);
\draw [ fill={rgb,255:red,0; green,0; blue,0} , line width=2pt ] (2.5,3.25) circle (0.25cm);
\draw [ fill={rgb,255:red,0; green,0; blue,0} , line width=2pt ] (6.25,3.25) circle (0.25cm);
\draw [line width=2pt, ->, >=Stealth] (-4.5,3.25) .. controls (-3.75,3.75) and (-2.5,3.75) .. (-1.75,3.25) ;
\draw [line width=2pt, ->, >=Stealth] (3,3.25) .. controls (3.75,3.75) and (5,3.75) .. (5.75,3.25) ;
\draw [line width=2pt, ->, >=Stealth] (-1.75,3) .. controls (-2.5,2.5) and (-3.5,2.5) .. (-4.5,3) ;
\node [font=\LARGE] at (-3,2.25) {$\lambda$(3,2)};
\draw [line width=2pt, ->, >=Stealth] (5.75,3) .. controls (5,2.5) and (4,2.5) .. (3,3) ;
\node [font=\LARGE] at (4.5,2.25) {$\lambda$(5,4)};
\draw [line width=2pt, ->, >=Stealth] (-8.75,4) .. controls (-7,6.25) and (-3,6.25) .. (-1.25,4) ;
\node [font=\LARGE] at (-5,6.25) {$\lambda$(1,3)};
\end{circuitikz}
}%

\caption{A possible graph describing the allowed lateral gene transfer events between the lineages of Figure 2. Note that the edges need not be next to each other for gene transfer to be allowed. Gene transfer can also be allowed in one direction, but not the other.}
\label{fig:gene transfer graph}
\end{figure}

When constructing the rate matrix for all edges at a single tier $l$, each connected component of $G_l$ must be constructed individually.

\begin{defn}
\label{def: construct rate matrix}
    Let $T$ be a tiered X-tree and let $G_l$ be the gene transfer graph for a tier $l$. Assume that there are $n$ lineages.  Let $i = (i_1, \ldots, i_n) \in S^n$
    and $j = (j_1, \ldots, j_n) \in S^n$.  
    Then the rate matrix associated to the $l$-th tier is constructed as follows:
\[     
Q_l(i;j) = 
\begin{cases}
    q^{(a)}_{i_a,j_a} + \sum_{a' :  i_{a'} = j_a,  (a', a) \in G_l}  \lambda(a',a)  &\mathrm{d}_H(i,j)=1,  i_a \neq j_a \\
    -\sum_{k \in S^n:k\neq i}{Q(i;k)} & \mathrm{d}_H(i,j)=0\\
    0 &\mathrm{d}_H(i,j)>1
\end{cases}   
\]
\end{defn}

In words, for off diagonal entries, the rate $Q_l(i;j)$ is nonzero when $i$ and $j$
differ in exactly one position.  The resulting rate is the sum of rates for 
all ways that a change could occur changing $i$ into $j$.  If that change position is the $a$th position,  this consists
of the site substitution rate $q^{(a)}_{i_a, j_a}$
plus the sum of $\lambda(a', a)$ rates for each possible LGT event that
could change $i$ into $j$.  

\begin{ex}
     We return to Example \ref{ex: one distribution} and the tree on four leaves in Figure \ref{fig:kronecker}.  Instead of separate transition matrices for sibling edges, we will use a single matrix which includes lateral gene transfer. Edges 1 and 2 will have transition matrix $P_{12}$, edges 3 and 4 will have transition matrix $P_{34}$, and edges 7 and 8 will have transition matrix $P_{78}$. Then the probability distribution of states on the leaves is 
    \[    \pi (\trunc(P_{12}))(\trunc(P_{34}) \otimes P_5 )(P_6 \otimes \trunc(P_{78} )\otimes P_9).    \]
\end{ex}

To define the model, we need to condense the notation of parameters. 
Since we consider equidistant trees, we can consider the length parameters
on the tree's edges to just consist of one parameter $t_i$ for each tier. 
We will denote the vector of tier lengths as 
$t=(t_1,\ldots,t_k)$, $t\in \rr^k$. 
Additionally, let $T$ have an associated sequence of gene transfer 
graphs $G=(G_{1},...,G_{k})$. 
Let $\Lambda=(\Lambda_{1},...,\Lambda_{k}), \Lambda_l \in \rr^{n_l}_{>0}$ be the set of edge parameters for each gene transfer graph, 
where $n_l$ is the number of edges in $G_l$. Let $E(G)$ be the total number of edges in $G$.
We associate to each edge $e$ in the tree a rate matrix $Q^{(e)}$. 
Let the sequence of transition matrices on each edge be $Q$. 

\begin{defn}
    Let there be a set of taxa $X$ and let $T$ be a tiered phylogenetic $X$-tree. The \textit{Lateral Gene Transfer Model} on $T$ is the set of probability distributions on the sequences of states on the taxa $x_1, \ldots  ,x_n$ resulting from the Markov process: 
    \[
    M(T,G)=\{\pp(T,G,\Lambda,Q,t):\Lambda\in\rr^{E(G)}_{>0},Q\in \rr^{k\times(mr\times mr)},t\in \rr^k_{>0}\}.
    \]
\end{defn}

The model of lateral gene transfer used in \cite{kubatko2026revisitingrandommodellateral} is in fact a special case of the Lateral Gene Transfer Model described here. This case uses a uniform rate matrix across all edges, a uniform LGT rate $\lambda$, and a gene transfer graph in which the subgraph for each tier is a complete bidirectional graph. 

Allowing lateral gene transfer to occur between any edges in the same tier is possibly an unrealistic assumption. We might expect gene transfer events only to happen between closely related species \cite{limited_range_of_lgt}. Because of this assumption, it follows that we should create a model in which the possible LGT events are restricted in some way.

\begin{defn}
    Let $T$ be a tiered X-tree. For each tier $l$, let $G_l$ be the graph consisting of all edges $(a_1\rightarrow a_2)$ and $(a_2 \rightarrow a_1)$ for all sibling edges 
    $a_1$ and $a_2$ in tier $l$. 
    Let $G^{loc} = (G_1, \ldots, G_k)$ be the resulting gene transfer graph. 
    We call $G^{loc}$ the \textit{local gene transfer graph}. The \textit{Local Gene Transfer Model} is the set of probability distributions  $M(T,G^{loc})$, where $\Lambda\in \rr^c_{>0}$ where $c$ is the number of connected components of $G^{loc}$. 
\end{defn}

Figure \ref{fig: cherry structure} shows an example of where gene transfer is allowed to occur within the tree under the new restrictions. Figure \ref{fig: local gene transfer graph} shows the corresponding local gene transfer graph. Under this model, only the most closely related lineages can exchange genetic material.

Identifiability analysis is the problem of determining which parameters of the model
can be determined from data.  In that case of our lateral gene transfer model, we would be
interested in determining the tree, rate parameters, and branch lengths of the model.
We assume that data consists of i.i.d.~samples from a probability distribution $p \in
M(T,G)$.  As the sample size tends to infinity, the empirical distribution obtained from this
sample converges to $p$.  Hence,  structural identifiability questions concern whether
or not the probability distribution $p$ has enough information to infer the parameters
of the model.  In this work we specifically focus on the tree parameter $T$ itself,
which leads us the the following definition.

\begin{defn}
    Let $T_1$ and $T_2$ be binary tiered $X$-trees with differing unrooted topologies.  
    Let $T'_1$ and $T'_2$ be the unrooted $X$-trees obtained by suppressing the root
    and all vertices of degree $2$ in $T_1$ and $T_2$ respectively.  
    The unrooted topologies $T'_1$ and $T'_2$ are 
     \textit{generically distinguishable} if 
\[  \dim[M(T_1,G_1^{loc})\cap M(T_2,G_2^{loc})]< \min \left( \dim[M(T_1,G_1^{loc})], \dim[M(T_2,G_2^{loc})]\right) .\]
The unrooted tree parameter is \emph{generically identifiable}  for all pairs $T_1$ and $T_2$ that have different
unrooted $X$-trees.  
\end{defn}

The modifier ``generically'' is added to account for the fact that the models
$M(T_1,G_1^{loc})$ and $M(T_2,G_2^{loc})$ might have a nontrivial overlap.  However, if the
dimension of the intersection is small, then, with probability $1$ any distribution that is
in one of the models does not belong to both. 

\begin{figure}[]
\centering
\subfloat[]{
\resizebox{.5\textwidth}{!}{%
\begin{circuitikz}
\tikzstyle{every node}=[font=\LARGE]

\draw [ fill={rgb,255:red,0; green,0; blue,0} ] (-2.5,13.25) circle (0.25cm);
\draw [ fill={rgb,255:red,0; green,0; blue,0} , line width=0.2pt ] (-5,9.5) circle (0.25cm);
\draw [line width=2pt, short] (-2.5,13.25) -- (-5,9.5);
\draw [line width=2pt, short] (-2.5,13.25) -- (0,9.5);

\draw [ fill={rgb,255:red,0; green,0; blue,0} ] (-5,9.5) circle (0.25cm);

\draw [line width=2pt, short] (-5,9.5) -- (-7.5,5.75);
\draw [line width=2pt, short] (-5,9.5) -- (-2.5,2);

\draw [ fill={rgb,255:red,0; green,0; blue,0} ] (-2.5,2) circle (0.25cm);
\draw [line width=2pt, short] (2.5,5.75) -- (0,2);

\draw [line width=2pt, short] (0,9.5) -- (2.5,5.75);
\draw [line width=2pt, short] (2.5,5.75) -- (5,2);
\draw [line width=2pt, short] (-7.5,5.75) -- (-10,2);
\draw [ fill={rgb,255:red,0; green,0; blue,0} , line width=0.2pt ] (-10,2) circle (0.25cm);
\draw [ fill={rgb,255:red,0; green,0; blue,0} , line width=0.2pt ] (-7.5,5.75) circle (0.25cm);
\draw [ fill={rgb,255:red,0; green,0; blue,0} , line width=0.2pt ] (-3.75,5.75) circle (0.25cm);
\draw [ fill={rgb,255:red,0; green,0; blue,0} , line width=0.2pt ] (2.5,5.75) circle (0.25cm);
\draw [ fill={rgb,255:red,0; green,0; blue,0} , line width=0.2pt ] (0,9.5) circle (0.25cm);
\draw [ fill={rgb,255:red,0; green,0; blue,0} , line width=0.2pt ] (0,2) circle (0.25cm);
\draw [ fill={rgb,255:red,0; green,0; blue,0} , line width=0.2pt ] (5,2) circle (0.25cm);

\draw[line width=2pt, -{Stealth} ] (-3.5,11.5) -- (-1.5,11.5);
\draw[line width=2pt, {Stealth}- ] (-4.2,10.5) -- (-.8,10.5);

\draw[line width=2pt, -{Stealth} ] (-6,7.75) -- (-4.5,7.75);
\draw[line width=2pt, {Stealth}- ] (-6.7,6.75) -- (-4.2,6.75);

\draw[line width=2pt, -{Stealth} ] (-8.5,4) -- (-3.3,4);
\draw[line width=2pt, {Stealth}- ] (-9.2,3) -- (-3,3);

\draw[line width=2pt, -{Stealth} ] (1.5,4) -- (3.5,4);
\draw[line width=2pt, {Stealth}- ] (0.8,3) -- (4.2,3);

\node [font=\Huge] at (-4.2,11.75) {1};
\node [font=\Huge] at (-.8,11.75) {2};

\node [font=\Huge] at (-6.8,7.75) {3};
\node [font=\Huge] at (-3.8,7.75) {4};
\node [font=\Huge] at (2,7.75) {5};

\node [font=\Huge] at (-9.5,3.75) {6};
\node [font=\Huge] at (-2.5,3.75) {7};
\node [font=\Huge] at (.5,3.75) {8};
\node [font=\Huge] at (4.5,3.75) {9};

\end{circuitikz}
\label{fig: cherry structure}
}%
}

\subfloat[]{
\resizebox{.5\textwidth}{!}{%
\begin{circuitikz}
\tikzstyle{every node}=[font=\LARGE]

\node [font=\huge] at (-5,5.25) {1};
\node [font=\huge] at (-1,5.25) {2};

\draw [ fill={rgb,255:red,0; green,0; blue,0} , line width=2pt ] (-5,6) circle (0.15cm); 
\draw [ fill={rgb,255:red,0; green,0; blue,0} , line width=2pt ] (-1,6) circle (0.15cm); 

\draw [line width=2pt, ->, >=Stealth] (-4.5,6.25) .. controls (-3.5,6.75) and (-2.5,6.75) .. (-1.5,6.25) ; 
\draw [line width=2pt, ->, >=Stealth] (-1.5,5.75) .. controls (-2.5,5.25) and (-3.5,5.25) .. (-4.5,5.75) ; 

\node [font=\huge] at (-7,2.25) {3};
\node [font=\huge] at (-3,2.25) {4};
\node [font=\huge] at (1,2.25) {5};

\draw [ fill={rgb,255:red,0; green,0; blue,0} , line width=2pt ] (-7,3) circle (0.15cm); 
\draw [ fill={rgb,255:red,0; green,0; blue,0} , line width=2pt ] (-3,3) circle (0.15cm); 
\draw [ fill={rgb,255:red,0; green,0; blue,0} , line width=2pt ] (1,3) circle (0.15cm); 

\draw [line width=2pt, ->, >=Stealth] (-6.5,3.25) .. controls (-5.5,3.75) and (-4.5,3.75) .. (-3.5,3.25) ; 
\draw [line width=2pt, ->, >=Stealth] (-3.5,2.75) .. controls (-4.5,2.25) and (-5.5,2.25) .. (-6.5,2.75) ; 

\node [font=\huge] at (-9,-0.75) {6};
\node [font=\huge] at (-5,-0.75) {7};
\node [font=\huge] at (-1,-0.75) {8};
\node [font=\huge] at (3,-0.75) {9};

\draw [ fill={rgb,255:red,0; green,0; blue,0} , line width=2pt ] (-9,0) circle (0.15cm); 
\draw [ fill={rgb,255:red,0; green,0; blue,0} , line width=2pt ] (-5,0) circle (0.15cm); 
\draw [ fill={rgb,255:red,0; green,0; blue,0} , line width=2pt ] (-1,0) circle (0.15cm); 
\draw [ fill={rgb,255:red,0; green,0; blue,0} , line width=2pt ] (3,0) circle (0.15cm); 

\draw [line width=2pt, ->, >=Stealth] (-8.5,0.25) .. controls (-7.5,0.75) and (-6.5,0.75) .. (-5.5,0.25) ; 

\draw [line width=2pt, ->, >=Stealth] (-5.5,-0.25) .. controls (-6.5,-0.75) and (-7.5,-0.75) .. (-8.5,-0.25) ; 

\draw [line width=2pt, ->, >=Stealth] (-0.5,0.25) .. controls (0.5,0.75) and (1.5,0.75) .. (2.5,0.25) ; 

\draw [line width=2pt, ->, >=Stealth] (2.5,-0.25) .. controls (1.5,-0.75) and (0.5,-0.75) .. (-0.5,-0.25) ; 

\end{circuitikz}
}%
\label{fig: local gene transfer graph}
}

\caption{Figure (A) shows an example tiered X-tree with labeled edges. The arrows represent gene transfer events and can only take place between two lineages which share their most ancestor with out-degree two. Figure (B) shows the local gene transfer graph describing where lateral gene transfer is possible on the tree shown in Figure (A), under the Local Gene Transfer Model.}

\end{figure}


\section{Reinterpretation of the local model as a graphical model}\label{sec:graphical}

Graphical models are statistical models that use graphs to represent
interactions between collections of random variables.
Given a graph $G = (V,E)$, the vertices represent  random variables, 
and the edges represent dependence relations between the random variables. 
In a typical phylogenetic tree model with no lateral gene transfer, the vertices of the tree are the random variables of the graphical model which depend only on their parent nodes. 
This is an example of a directed graphical model. 
Other models of phylogenetic evolution cannot be viewed as graphical models themselves, but can be modified into an associated graphical model. 
For example, when hybridization or lateral gene transfer events occur, they introduce dependencies that are not strictly ancestor-to-descendant. This complicates the relationships in the model. Still, an associated graphical model can be useful for extracting relationships between the probabilities of the ultimate taxa.
An example of this idea for the displayed tree phylogenetic network model 
appears in \cite{Sullivant2025}.  

We start by defining key pieces to build the graphical models.  
Graphical models can be defined in two ways, either through
conditional independence relations or parametrically.

Let $G = (V,E)$ be a  directed acyclic graph.
We let $X=(X_v  \mid  v\in V)$ be a random vector, whose components
are indexed by the vertices of $V$.  For a subset $A \subseteq V$, 
let $X_A = (X_a | a \in A)$ denote the subvector of $X$ indexed by $A$. For two subsets of $V$, $A$ and $B$, the joint probability density function for the random vectors $X_A$ and $X_B$ is $f_{A\cup B}(x_A,x_B)$. Adding a $C$, third subset of $V$, $f_{A\cup B |C}(x_A,x_B|x_C)$ is the joint probability distribution on variables $X_A$ and $X_B$ conditioned on the variable $X_C$.

\begin{defn}
    Let $A,B,C \subseteq V$ be pairwise disjoint. 
    The random vector $X_A$ is \textit{conditionally independent of $X_B$ given $X_C$} if and only if
    \[
    f_{A\cup B| C}(x_A,x_B|x_C)=f_{A|C}(x_A|x_C)\cdot f_{B|C}(x_B|x_C)   
    \]
    for all $x_A,x_B,$ and $x_C$, where $f$ is the joint probability density function.
    We use the notation \\ $X_A \indep X_B|X_C$ to denote that the random vector $X$ satisfies the conditional independence statement that $X_A$ is conditionally independent from $X_B$ given $X_C$.
\end{defn}

The graphical model associated with $G$ encodes the dependence relations between two variables with an edge of the graph incident to or a path between the two nodes. If there are no edges between the nodes, conditional independence is implied between them. The \textit{directed local Markov property} associated to a graph describes the distributions such that each variable depends only on its parent.

\begin{defn}
    Let $G=(V,E)$ be a directed acyclic graph. 
    The \textit{directed local Markov property} associated with $G$ consists of all conditional independence statements $X_v \indep X_{\nd(v)\setminus \pa(v)} | X_{\pa(v)}$, for all $v\in V$.
\end{defn}

In order to make use of the directed local Markov property it is helpful to discuss the parametric description of graphical models, which in this application is adept in describing the underlying processes by which the data is generated. In the case of directed graphs, each vertex should depend only on its parent vertex, and the probability densities should factorize in the same way. Let $T=(V,E)$ be a directed acyclic graph. For each node $j\in V$ we have a conditional distribution variable $X_j$ conditional on the parents of node $j$ in the graph $T$, $f_j (X_j|x_{\pa(j)})$, and consider the probability densities of the form 
\[f(x) = \prod_{j\in V}f_j(x_j|x_{\pa(j)}).\]
This is called the recursive factorization property.

\begin{defn}
    The \textit{parametric directed graphical model} associated to the directed acyclic graph $T$ consists of all probability density functions that factorize as the product of conditionals:
    \[f(x) = \prod_{j\in V}f_j(x_j|x_{\pa(j)})\]
\end{defn}

It turns out that satisfying the local directed Markov property is equivalent to satisfying the recursive factorization property.

\begin{thm}\cite[Theorem 13.2.10]{Algebraic_Statistics}\label{local_markov_factorization}
    A probability density function $f$ satisfies the recursive factorization property associated to the directed acyclic graph $T$ if and only if it satisfies the directed local Markov property associated to $T$. 
    
\end{thm}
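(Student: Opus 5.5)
We prove the two implications separately. Throughout, we fix a topological ordering $v_1,\ldots,v_n$ of the vertices of $T$, meaning that every edge $v_i\rightarrow v_j$ has $i<j$; such an ordering exists because $T$ is acyclic. All integrals below are sums in the discrete case.

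\emph{Local Markov $\Rightarrow$ factorization.} By the chain rule,
\[f(x)=\prod_{i=1}^n f(x_{v_i}\mid x_{v_1},\ldots,x_{v_{i-1}}).\]
In a topological ordering, no predecessor of $v_i$ is a descendant of $v_i$. So $W_i=\{v_1,\ldots,v_{i-1}\}\subseteq \nd(v_i)$ and $\pa(v_i)\subseteq W_i$. The local Markov statement gives $X_{v_i}\indep X_{\nd(v_i)\setminus\pa(v_i)}\mid X_{\pa(v_i)}$. Marginalizing out the variables in $\nd(v_i)\setminus W_i$ gives $X_{v_i}\indep X_{W_i\setminus\pa(v_i)}\mid X_{\pa(v_i)}$ (the decomposition property). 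Hence $f(x_{v_i}\mid x_{W_i})=f(x_{v_i}\mid x_{\pa(v_i)})$. Setting $f_{v_i}$ equal to this conditional yields the recursive factorization.

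\emph{Factorization $\Rightarrow$ local Markov.} Suppose $f(x)=\prod_j f_j(x_j\mid x_{\pa(j)})$, where each $f_j(\cdot\mid x_{\pa(j)})$ is a density in its first argument. Fix $v$ and let $A=\nd(v)\cup\{v\}$.
\begin{enumerate}
\item The set $A$ is ancestral, meaning it is closed under taking parents. Indeed, if $u\in\nd(v)$, then no parent of $u$ is a descendant of $v$, since otherwise $u$ would be a descendant of $v$. Also $\pa(v)\subseteq\nd(v)$.
\item Integrate out the variables in $V\setminus A$, which is the set of proper descendants of $v$. Do this one vertex at a time, in reverse topological order.
\item At each stage the vertex $w$ being removed is a sink of the remaining graph. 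Its variable $x_w$ therefore appears only in the factor $f_w$, and $\int f_w(x_w\mid x_{\pa(w)})\,dx_w=1$. By induction this gives $f_A(x_A)=\prod_{j\in A}f_j(x_j\mid x_{\pa(j)})$.
\item Integrating out $x_v$ similarly gives $f_{\nd(v)}=\prod_{j\in\nd(v)}f_j$, because no vertex of $\nd(v)$ has $v$ as a parent.
\item Dividing, $f(x_v\mid x_{\nd(v)})=f_v(x_v\mid x_{\pa(v)})$. This depends on $x_{\nd(v)}$ only through $x_{\pa(v)}$, which is exactly $X_v\indep X_{\nd(v)\setminus\pa(v)}\mid X_{\pa(v)}$.
\end{enumerate}
As a byproduct, this identifies each $f_j$ with the true conditional density of $X_j$ given $X_{\pa(j)}$.

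The main obstacle is the marginalization step in the second direction. One must check that the factorized form is preserved when a sink is removed, and that the ancestral-set property guarantees that no remaining factor involves an integrated variable. A secondary technical point is that conditional densities are only defined where the conditioning marginal is positive. We handle this either by restricting to the support or by working with the strictly positive distributions assumed earlier in the paper ($\pi$ and $P$ positive). With that assumption, all the divisions above are legitimate.
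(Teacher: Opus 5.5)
The paper gives no proof of this statement. It imports the result from \cite[Theorem 13.2.10]{Algebraic_Statistics}, and only the direction ``recursive factorization $\Rightarrow$ local Markov'' is invoked, at the end of the proof of Lemma~\ref{model_is_graphical_model}. Your argument is the standard complete proof of the equivalence, and it is correct: the chain rule along a topological order plus decomposition in one direction, and marginalization onto the ancestral set $\nd(v)\cup\{v\}$ by integrating out sinks in the other.

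One remark concerns your treatment of zero marginals. Of your two options, only restricting to the support fits the way the paper uses the theorem. The joint distribution on the vertices of $\pinch(T)$ is not strictly positive, because edges such as $gh\rightarrow g$ carry the $0/1$ matrices $I\otimes\mathbf{1}$ and $\mathbf{1}\otimes I$, and the added degree-two edges carry identity matrices. So the positivity of $\pi$ and $P$ from Section~\ref{sec:Markov} is not available there.

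Fortunately, your factorization $\Rightarrow$ Markov argument needs no positivity, since steps 2--4 involve no division. Your steps give $f(x_v,x_{\nd(v)})=f_v(x_v\mid x_{\pa(v)})\,f(x_{\nd(v)})$. Integrating out $x_{\nd(v)\setminus\pa(v)}$ gives $f(x_v,x_{\pa(v)})=f_v(x_v\mid x_{\pa(v)})\,f(x_{\pa(v)})$. Combining these two identities gives the division-free form of the required conditional independence:
\[
f(x_v,x_{\nd(v)})\,f(x_{\pa(v)})=f(x_v,x_{\pa(v)})\,f(x_{\nd(v)}).
\]
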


The probability densities in the local lateral gene transfer model model 
do not satisfy the parametric factorization property, 
since edges may affect the states of their sibling lineages. 
Ideally, we would like to have a graph in which the probabilities do satisfy the recursive factorization property. 
Then we can use the conditional independence relations from flattening matrices to separate the leaves into splits. 
Once we have all the splits of the tree, we will be able to reconstruct the unrooted tree structure from the data on the leaves, via the Splits Equivalence Theorem (Theorem \ref{splits_theorem}).

For a given tiered $X$-tree $T$,
we will construct a new tree $\pinch(T)$, on which we can construct
a graphical model that has the same distribution. 
The basic idea is as follows:
suppose there is a sequence of sibling edges, with each pair of sibling edges descending from a pair of edges in the previous tier. Let the first pair of sibling edges on vertices $a,b,c,$ be $a\rightarrow b$ and $a\rightarrow c$, let an intermediary pair of edges be $s\rightarrow u$ and $t\rightarrow v$, and let the final pair be $w\rightarrow y$ and $x\rightarrow z$. We will collapse each pair of sibling edges together. 
For the first pair, we will introduce a new vertex $bc$. Then we collapse the two edges on $a,b,$ and $c$ to one edge $a\rightarrow bc$. For the intermediary edges, we replace the pair $s\rightarrow u$ and $t\rightarrow v$ with the edge $st\rightarrow uv$ on vertices $st$ and $uv$. The final pair we will call \textit{terminal vertices}, since they are sibling vertices whose children are not sibling vertices. The set of pairs of terminal vertices and pairs of terminal edges in a tree $T$ is called $\term(T)$. For this pair of terminal vertices, we similarly replace edges $w\rightarrow y$ and $x\rightarrow z$ with an edge $wx\rightarrow yz$, with $wx$ and $yz$. However, we will also add edges $yz\rightarrow y'$ and $yz\rightarrow z'$, where $y'$ and $z'$ each take the states of $y$ and $z$ in $T$, and are thus completely determined by vertex $yz$. The rest of the edges and vertices of T remain unchanged. We will call this the \textit{pinched graph of  $T$}, or $\pinch(T)$.

The definition of the pinched graph is formalized in the following:

\begin{defn}
    For a rooted phylogenetic X-tree $T=(V,E)$, the graph $\pinch(T)=(V',E')$ is the graph where
    \begin{itemize}
        \item $V'= \left( V\cup   \{ ab: (a,b)\in \sib(T) \} \right) \setminus \{ c,d: (c,d)\notin \term(T), (c,d)\in \sib(T)  \}    $
        \item $E'=  \left( E \cup \{ ab \rightarrow cd: (a\rightarrow c , b\rightarrow d) \in \sib(T) ,a \rightarrow bc: (a\rightarrow b, a\rightarrow c) \in \sib(T) \}  \right) \\ \setminus \{ u\rightarrow w,v\rightarrow x: (u\rightarrow w , v\rightarrow x )\notin \term(T),(u\rightarrow w , v\rightarrow x )\in \sib(T)  \}  $
    \end{itemize}
 \end{defn}

\begin{figure}[]
\centering
\resizebox{.9\textwidth}{!}{%
\begin{circuitikz}
\tikzstyle{every node}=[font=\LARGE]

\draw [ fill={rgb,255:red,0; green,0; blue,0} ] (-2.5,13.25) circle (0.25cm) ;
\node [font=\Huge] at (-1.9,13.25) {a};
\draw [ fill={rgb,255:red,0; green,0; blue,0} , line width=0.2pt ] (-5,9.5) circle (0.25cm);
\node [font=\Huge] at (-4.4,9.5) {b};
\draw [line width=2pt, short] (-2.5,13.25) -- (-5,9.5);
\draw [line width=2pt, short] (-2.5,13.25) -- (0,9.5);

\draw [line width=2pt, short] (-5,9.5) -- (-7.5,5.75);
\draw [line width=2pt, short] (-5,9.5) -- (-2.5,2);

\draw [ fill={rgb,255:red,0; green,0; blue,0} ] (-2.5,2) circle (0.25cm);
\node [font=\Huge] at (-1.9,2) {h};
\draw [line width=2pt, short] (2.5,5.75) -- (0,2);

\draw [line width=2pt, short] (0,9.5) -- (2.5,5.75);
\draw [line width=2pt, short] (2.5,5.75) -- (5,2);
\draw [line width=2pt, short] (-7.5,5.75) -- (-10,2);
\draw [ fill={rgb,255:red,0; green,0; blue,0} , line width=0.2pt ] (-10,2) circle (0.25cm);
\node [font=\Huge] at (-9.4,2) {g};
\draw [ fill={rgb,255:red,0; green,0; blue,0} , line width=0.2pt ] (-7.5,5.75) circle (0.25cm);
\node [font=\Huge] at (-6.9,5.75) {d};
\draw [ fill={rgb,255:red,0; green,0; blue,0} , line width=0.2pt ] (-3.75,5.75) circle (0.25cm);
\node [font=\Huge] at (-3.15,5.75) {e};
\draw [ fill={rgb,255:red,0; green,0; blue,0} , line width=0.2pt ] (2.5,5.75) circle (0.25cm);
\node [font=\Huge] at (3.1,5.75) {f};
\draw [ fill={rgb,255:red,0; green,0; blue,0} , line width=0.2pt ] (0,9.5) circle (0.25cm);
\node [font=\Huge] at (0.6,9.5) {c};
\draw [ fill={rgb,255:red,0; green,0; blue,0} , line width=0.2pt ] (0,2) circle (0.25cm);
\node [font=\Huge] at (0.6,2) {i};
\draw [ fill={rgb,255:red,0; green,0; blue,0} , line width=0.2pt ] (5,2) circle (0.25cm);
\node [font=\Huge] at (5.6,2) {j};

\draw [dashed] (-12,9.5) -- (-5.5,9.5);
\draw [dashed] (-12,5.75) -- (-7.5,5.75);
\node [font=\Huge] at (-11,11.75) {tier 1};
\node [font=\Huge] at (-11,7.75) {tier 2};
\node [font=\Huge] at (-11,3.75) {tier 3};

\draw [ fill={rgb,255:red,0; green,0; blue,0} ] (17.5,13.25) circle (0.25cm);
\node [font=\Huge] at (18.1,13.25) {a};
\draw [ fill={rgb,255:red,0; green,0; blue,0} ] (17.5,11) circle (0.25cm);
\node [font=\Huge] at (18.5,11) {bc};
\draw [line width=2pt, short] (17.5,13.25) -- (17.5,11);
\draw [ fill={rgb,255:red,0; green,0; blue,0} , line width=0.2pt ] (15,9.5) circle (0.25cm);
\node [font=\Huge] at (15.6,9.5) {b};
\draw [line width=2pt, short] (17.5,11) -- (15,9.5);
\draw [line width=2pt, short] (17.5,11) -- (20,9.5);

\draw [line width=2pt, short] (15,9.5) -- (15,5.75);
\draw [ fill={rgb,255:red,0; green,0; blue,0} ] (15,5.75) circle (0.25cm);
\node [font=\Huge] at (15.8,5.75) {de};
\draw [line width=2pt, short] (15,5.75) -- (15,3.5);
\draw [ fill={rgb,255:red,0; green,0; blue,0} ] (15,3.5) circle (0.25cm);
\node [font=\Huge] at (16,3.5) {gh};
\draw [line width=2pt, short] (15,3.5) -- (12.5,2);
\draw [line width=2pt, short] (15,3.5) -- (17.5,2);
\draw [ fill={rgb,255:red,0; green,0; blue,0} ] (12.5,2) circle (0.25cm);
\node [font=\Huge] at (13.2,2) {g};
\draw [ fill={rgb,255:red,0; green,0; blue,0} ] (17.5,2) circle (0.25cm);
\node [font=\Huge] at (18.1,2) {h};

\draw [line width=2pt, short] (20,9.5) -- (22.5,5.75);

\draw [line width=2pt, short] (22.5,3.5) -- (20,2);
\draw [line width=2pt, short] (22.5,3.5) -- (25,2);
\draw [line width=2pt, short] (22.5,5.75) -- (22.5,3.5);
\draw [ fill={rgb,255:red,0; green,0; blue,0} ] (22.5,3.5) circle (0.25cm);
\node [font=\Huge] at (23.3,3.5) {ij};
\draw [ fill={rgb,255:red,0; green,0; blue,0} , line width=0.2pt ] (22.5,5.75) circle (0.25cm);
\node [font=\Huge] at (23.1,5.75) {f};
\draw [ fill={rgb,255:red,0; green,0; blue,0} , line width=0.2pt ] (20,9.5) circle (0.25cm);
\node [font=\Huge] at (20.6,9.5) {c};
\draw [ fill={rgb,255:red,0; green,0; blue,0} , line width=0.2pt ] (20,2) circle (0.25cm);
\node [font=\Huge] at (20.6,2) {i};
\draw [ fill={rgb,255:red,0; green,0; blue,0} , line width=0.2pt ] (25,2) circle (0.25cm);
\node [font=\Huge] at (25.6,2) {j};

\draw [dashed] (8,9.5) -- (15,9.5);
\draw [dashed] (8,5.75) -- (15,5.75);
\draw [dashed] (8,11) -- (17.5,11);
\draw [dashed] (8,3.5) -- (15,3.5);
\node [font=\Huge] at (9,12) {tier 1};
\node [font=\Huge] at (9,10.2) {tier 1'};
\node [font=\Huge] at (9,7.75) {tier 2};
\node [font=\Huge] at (9,4.5) {tier 3};
\node [font=\Huge] at (9,2.5) {tier 3'};

\end{circuitikz}
}%
\caption{For the tiered $X$-tree $T$ shown on the left, the tree $\pinch(T)$ is shown on the right. There are three tiers on $T$, and five tiers on $\pinch(T)$.}
\label{fig: pinched tree}
\end{figure}

\begin{figure}[]
\centering
\resizebox{.9\textwidth}{!}{%
\begin{circuitikz}
\tikzstyle{every node}=[font=\LARGE]

\draw [ fill={rgb,255:red,0; green,0; blue,0} ] (-2.5,13.25) circle (0.25cm) ;
\node [font=\Huge] at (-1.9,13.25) {a};
\draw [ fill={rgb,255:red,0; green,0; blue,0} , line width=0.2pt ] (-5,9.5) circle (0.25cm);
\node [font=\Huge] at (-4.4,9.5) {b};
\draw [line width=2pt, short] (-2.5,13.25) -- (-5,9.5);
\draw [line width=2pt, short] (-2.5,13.25) -- (0,9.5);

\draw [line width=2pt, short] (-5,9.5) -- (-7.5,5.75);
\draw [line width=2pt, short] (-5,9.5) -- (-3.75,5.75);
\draw [line width=2pt, short] (0,9.5) -- (2.5,5.75);

\draw [ fill={rgb,255:red,0; green,0; blue,0} , line width=0.2pt ] (-7.5,5.75) circle (0.25cm);
\node [font=\Huge] at (-6.9,5.75) {d};
\draw [ fill={rgb,255:red,0; green,0; blue,0} , line width=0.2pt ] (-3.75,5.75) circle (0.25cm);
\node [font=\Huge] at (-3.15,5.75) {e};
\draw [ fill={rgb,255:red,0; green,0; blue,0} , line width=0.2pt ] (2.5,5.75) circle (0.25cm);
\node [font=\Huge] at (3.1,5.75) {f};
\draw [ fill={rgb,255:red,0; green,0; blue,0} , line width=0.2pt ] (0,9.5) circle (0.25cm);
\node [font=\Huge] at (0.6,9.5) {c};

\draw [dashed] (-12,9.5) -- (-5.5,9.5);

\node [font=\Huge] at (-11,11.75) {tier 1};
\node [font=\Huge] at (-11,7.75) {tier 2};

\draw [ fill={rgb,255:red,0; green,0; blue,0} ] (17.5,13.25) circle (0.25cm);
\node [font=\Huge] at (18.1,13.25) {a};
\draw [ fill={rgb,255:red,0; green,0; blue,0} ] (17.5,11) circle (0.25cm);
\node [font=\Huge] at (18.5,11) {bc};
\draw [line width=2pt, short] (17.5,13.25) -- (17.5,11);
\draw [ fill={rgb,255:red,0; green,0; blue,0} , line width=0.2pt ] (15,9.5) circle (0.25cm);
\node [font=\Huge] at (15.6,9.5) {b};
\draw [line width=2pt, short] (17.5,11) -- (15,9.5);
\draw [line width=2pt, short] (17.5,11) -- (20,9.5);

\draw [line width=2pt, short] (15,9.5) -- (15,7.25);
\draw [line width=2pt, short] (15,7.25) -- (12.5,5.75);
\draw [line width=2pt, short] (15,7.25) -- (17.5,5.75);
\draw [ fill={rgb,255:red,0; green,0; blue,0} ] (15,7.25) circle (0.25cm);
\draw [ fill={rgb,255:red,0; green,0; blue,0} ] (12.5,5.75) circle (0.25cm);
\draw [ fill={rgb,255:red,0; green,0; blue,0} ] (17.5,5.75) circle (0.25cm);
\node [font=\Huge] at (15.8,7.25) {de};
\node [font=\Huge] at (13.3,5.75) {d};
\node [font=\Huge] at (18.3,5.75) {e};

\node [font=\Huge] at (23.1,7.25) {f};
\node [font=\Huge] at (23.1,5.75) {f'};

\draw [line width=2pt, short] (20,9.5) -- (22.5,7.25);
\draw [line width=2pt, short] (22.5,7.25) -- (22.5,5.75);
\draw [ fill={rgb,255:red,0; green,0; blue,0} , line width=0.2pt ] (22.5,7.25) circle (0.25cm);
\draw [ fill={rgb,255:red,0; green,0; blue,0} , line width=0.2pt ] (22.5,5.75) circle (0.25cm);

\draw [ fill={rgb,255:red,0; green,0; blue,0} , line width=0.2pt ] (20,9.5) circle (0.25cm);
\node [font=\Huge] at (20.6,9.5) {c};

\draw [dashed] (8,9.5) -- (15,9.5);
\draw [dashed] (8,11) -- (17.5,11);
\draw [dashed] (8,7.25) -- (15,7.25);

\node [font=\Huge] at (9,12) {tier 1};
\node [font=\Huge] at (9,10.2) {tier 1'};
\node [font=\Huge] at (9,8.2) {tier 2};
\node [font=\Huge] at (9,6.5) {tier 2'};

\end{circuitikz}
}%
\caption{For the tree $T$ shown on the left, the tree $\pinch(T)$ is shown on the right. There are two tiers on $T$, and three tiers on $\pinch(T)$.}
\label{fig: pinched tree small}
\end{figure}

\begin{ex}
    Consider the tree $T$ shown in Figure \ref{fig: pinched tree}. 
    There are four pairs of sibling edges in $T$:
    \[
a \rightarrow b, a \rightarrow c, \quad  b \rightarrow d, b  \rightarrow e,  
\quad  d \rightarrow g, e \rightarrow h, \quad \mbox{ and } \quad 
f \rightarrow i, f \rightarrow j
    \]
 In $\pinch(T)$, the structure corresponding with $a\rightarrow b$ and $a \rightarrow c$ consists of vertices $a, bc, b$, and $c$ and edges $a\rightarrow bc$, $bc \rightarrow b$, and $bc \rightarrow c$. 
 Where $T$ has the single tier $1$, $\pinch(T)$ has two tiers, 
 $1$ and $1'$, with the vertex $bc$ on the border between the two. 
 The sibling edges $b\rightarrow d$ and $b\rightarrow e$ in $T$ correspond 
 to the edge $b \rightarrow de$ in $\pinch(T)$. 
 Since $d$ and $e$ are not terminal sibling vertices in $T$, 
 there are no vertices $d$ and $e$ in $\pinch(T)$ and therefore 
 tier $2$ remains the same in both $\pinch(T)$ and $T$. 
 The edges $d \rightarrow g$ and $e \rightarrow h$ in $T$ correspond to the structure in $\pinch(T)$ with vertices $de$, $gh$, $g$, and $h$, and edges $de \rightarrow gh$, $gh \rightarrow g$, and $gh \rightarrow h$.
 The sibling edges $f \rightarrow i$ and $f\rightarrow j$ in $T$ correspond to the structure in $\pinch(T)$ with vertices $f$, $ij$, $i$, and $j$ and edges $f \rightarrow ij$, $ij \rightarrow i$, and $ij \rightarrow j$. Because $g,h,i$, and $j$ are terminal vertices in $T$, they also appear in $\pinch(T)$. 
 Additionally in $\pinch(T)$, there is the added tier $3'$.
\end{ex}

We will associate a random variable to each vertex in $\pinch(T)$.  
Each vertex $ab$ which resulted from combining vertices $a$ and $b$ in $T$ 
will have a random variable $Y_{ab}$ with state space $S^2$.
For each vertex $a \in V'$, we will have a random variable $Y_a$ with
state space $S$. The graphical model on $\pinch(T)$ is the set of all probability distributions which satisfy the recursive factorization property on $\pinch(T)$. We will now show that the graphical model on $\pinch(T)$ actually contains $M(T,G^{loc})$.

\begin{lemma}
    The set of probability distributions of the model $M(T,G^{loc})$ is contained in the graphical model associated to $\pinch(T)$. 
    \label{model_is_graphical_model}
\end{lemma}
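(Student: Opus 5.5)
The plan is to exhibit, for any parameters $(\Lambda,Q,t)$ of $M(T,G^{loc})$, conditional distributions $f_v(y_v\mid y_{\pa(v)})$ on the vertices of $\pinch(T)$. These will be chosen so that the recursive factorization on $\pinch(T)$ reproduces the tier-by-tier product formula for $\pp(X_l=x_l)$ from Section \ref{sec:local}. By Theorem \ref{local_markov_factorization}, the factorization property is the defining property of the graphical model, so this suffices once the leaf marginal of the constructed joint distribution is shown to equal the LGT leaf distribution.

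The first step is structural. Under $G^{loc}$ each tier's transfer graph is a disjoint union of 2-cycles on sibling pairs, together with isolated vertices for non-sibling edges. Hence the rate matrix $Q_l$ of Definition \ref{def: construct rate matrix} is the Kronecker sum of the $r^2\times r^2$ rate matrices $Q_{ab}$, one for each sibling pair, and the ordinary $q^{(e)}$ of the single edges. This holds because a Hamming-distance-one change in position $a$ only receives $\lambda$ contributions from $a$'s sibling. By the identity $\exp(A\oplus B)=\exp(A)\otimes\exp(B)$ already used in Section \ref{sec:Markov}, the tier transition matrix then factors as a Kronecker product of blocks $P_{ab}=\exp(Q_{ab}t_l)$ (truncated when the pair is born at this tier) and $P_e=\exp(q^{(e)}t_l)$. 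So each tier-$l$ transition kernel is a product of conditional kernels, each depending only on the state of one "unit" of the previous tier. A unit is either a single lineage or a sibling pair, and a sibling pair at tier $l$ that is not newly born descends from a sibling pair at tier $l-1$, by the definition of sibling (same most recent branching ancestor).

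The second step assigns the conditionals on $\pinch(T)$. For an unpaired edge $u\to v$, set $f_v(y_v\mid y_u)=(P_e)_{y_u,y_v}$. For a newly born pair, the edge $a\to bc$ gets $f_{bc}(y_b,y_c\mid y_a)=\trunc(P_{bc})_{(y_a,y_a),(y_b,y_c)}$. For an intermediate or terminal pair edge $st\to uv$, the vertex $uv$ gets $f_{uv}=(P_{uv})_{(y_s,y_t),(y_u,y_v)}$. For the copy edges $yz\to y'$ and $yz\to z'$ on terminal pairs, use the deterministic indicator kernels $\mathbf 1[y'=y]$ and $\mathbf 1[z'=z]$. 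A pair vertex $bc$ that is the child of a single lineage and whose members have their own children has no duplicate $b,c$ vertices except when terminal. In the case of $b$ with children $de$ in Figure \ref{fig: pinched tree}, the edges $bc\to b$ and $bc\to c$ are again deterministic projections. I will check these conventions case by case against the definition of $V'$ and $E'$, and verify that each kernel's parents in $\pinch(T)$ are exactly the previous-tier units it depends on.

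Finally, the product of these conditionals, summed over all non-leaf variables, collapses tier by tier into the matrix product $\pi\,P^{(1)}P^{(2)}\cdots$ of Kronecker-factored tier matrices. The deterministic edges merely identify variables, so summing over them is harmless. Since the leaves of $\pinch(T)$ carry the states of the leaves of $T$, the leaf marginals agree. I expect the main obstacle to be this bookkeeping rather than any deep point: making the pinch construction's cases (newly born, continuing, and terminal pairs, plus projections back to single vertices) line up exactly with the truncation and Kronecker structure, and confirming that no pair's kernel depends on anything outside its pinched parent. The Kronecker factorization of $Q_l$ under $G^{loc}$ is the key fact, and it is immediate from Definition \ref{def: construct rate matrix}.
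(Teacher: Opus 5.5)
Your proposal is correct and takes essentially the same approach as the paper. Both assign to the edges of $\pinch(T)$ the sibling-pair transition matrices (truncated for newly born pairs), the ordinary single-edge matrices, and deterministic projection or identity kernels, and then check that the tier-by-tier matrix product coincides with the recursive factorization over $\pinch(T)$. Your explicit observation that under $G^{loc}$ the tier rate matrix $Q_l$ is a Kronecker sum of the sibling-pair blocks and single-edge rate matrices, so that $\exp(Q_l t_l)$ factors, is a step the paper uses only implicitly; it is exactly what guarantees each kernel depends only on its parent in $\pinch(T)$.
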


\begin{proof}  

First, we will group variables by tier similarly to how we grouped variables in $T$. Let each vertex be labeled $Y_i$ such that the root is $Y_0$. Then the variables at the conclusion of each tier are grouped into a new variable $X_l$, indexed by the number of tiers with $Y_0=X_0$. Each variable $X_l$ is a vector of $Y_i$ variables whose length is the number of edges of $\pinch(T)$ in tier $l$. If $X_l$ contains $m$ many $Y$ variables with state space $S^2$ and $n$ many $Y$ variables with state space $S$, then the state of $X_l$ is a tuple of size $2m+n$. 

For each edge $e$ in $\pinch(T)$ which is directed toward a vertex having state space $S^2$, the rate matrix will be the same as the rate matrix of the two sibling edges in $T$ which were pinched together to form $e$. For each edge $e$ in $\pinch(T)$ directed toward a vertex with state space $S$, the rate matrix will be exactly the same as the rate matrix for $e$ in $T$.

Suppose there is a tier $l$ in $T$ which has been split 
into two tiers in $\pinch(T)$, which we will call 
$l$ and $l'$. In the first tier $l$, the transition 
matrices will match the transition matrices of the 
equivalent edges (or the equivalent sibling edges) so that $l$ in $\pinch(T)$ has the same length in time as $l$ in $T$. The tier $l'$ will behave as if there is no time passing. The transition matrices on all edges between degree-two vertices will be identity matrices. The two edges with a degree-three parent will have a variation on an identity matrix which reduces the dimension by half. The left edge will have transition matrix $I\otimes \textbf{1}$, and the right edge will have transition matrix $\textbf{1}\otimes I$, where $\textbf{1}$ is a column of ones in $\rr^r$, and $I$ is the identity matrix with size $r\times r$.
The interpretation of these transition matrices are that they extract from the random variable $Y_{ab}  = (X_a, X_b)$ the two random variables $X_a$ and $X_b$, respectively.

Now we follow the same process as in Section  \ref{sec:Markov}, 
where we obtain large transition matrices for each tier 
by summing together the transition matrices for the individual edges. 
Then we arrive at the distribution of the state of the leaves' variable $X_l$ by multiplying the transition matrices for each tier, along with the root distribution:
\[
\pp(X_l=x_l)= \sum_{x_0\in S}\cdots \sum_{x_{l-1} \in S^{n}} \pp(X_0=x_0)\prod_{1\leq i \leq l} \pp(X_i=x_i |X_{i-1} = x_{i-1})   
\]
where $n$ is the number of leaves.

To summarize, the set of probability distributions of the model $M(T,G^{loc})$ is a subset of distributions in the graphical model associated to $\pinch(T)$ with the following properties:
\begin{itemize}
\item For any two sibling edges $a\rightarrow b$ and $a\rightarrow c$ in $T$, the transition matrix for $a\rightarrow bc$ in $\pinch(T)$ is the same as the transition matrix for the two edges $a\rightarrow b$ and $a\rightarrow c$ in $T$
\item For any two sibling edges $a\rightarrow c$ and $b\rightarrow d$ in $T$, the transition matrix for $ab\rightarrow cd$ in $\pinch(T)$ is the same as the transition matrix for the two edges $a\rightarrow c$ and $b\rightarrow d$ in $T$
\item For vertices $uv$, $u$, and $v$, with state spaces $S^2, S,$ and $S$ respectively, if $uv$ has state $(\mu,\nu)$, then $u$ takes state $\mu$ and $v$ takes state $\nu$.
\end{itemize}
    From this we can see that for all vertices $v$ of $\pinch(T)$, $Y_v \indep Y_{\nd(v)\backslash \pa(v)}|Y_{\pa(v)}$.
\end{proof}

To elaborate on the assignment of rate matrices, we provide the following examples. Most of the edges of $\pinch(T)$ have very similar matrices to those in $T$, with the exception of sibling edges of $T$ that have been pinched together. Although there is a reduction in the number of edges, the combined rate matrix for the pair of sibling edges is the same as the rate matrix of the pinched together edge in $\pinch(T)$.

\begin{ex}
    Let $T$ be a tiered X-tree and let $a\rightarrow b$ and $a\rightarrow c$ be sibling edges. This pair of edges has a rate matrix $Q_{bc}$ as described in Definition \ref{def: construct rate matrix}. In $\pinch(T)$ there is an edge $a\rightarrow bc$  with rate matrix $Q_{bc}$ (the same rate matrix as in $T$). Given a distribution $\pi_a$ on vertex $a$, the distribution on $bc$ given $a$ is $\pi_{a}\trunc(P_{bc})$ where $P_{bc}=\exp(Q_{bc})$.

    Similarly, let $a\rightarrow c$ and $b\rightarrow d$ be sibling edges in $T$ with rate matrix $Q_{cd}$. Then there is an edge $ab\rightarrow cd$ in $\pinch(T)$ with rate matrix $Q_{cd}$. Given a distribution $\pi_{ab}$ on the vertex $ab$, the distribution on $cd$ is $\pi_{ab}P_{cd}$ where $P_{cd}=\exp(Q_{cd})$.
 \end{ex}

The added tiers of $\pinch(T)$, consisting of edges with identity matrices and modified identity matrices differ from the tiers in $T$. These tiers behave as if no time passes across them, 
and they only exist as a tool to split the random variables with state space $S^2$ into two separate random variables with state space $S$. These edges directed away from the variables with state space $S^2$ have matrices which select the first or second state of the ordered pair. The other edges in this tier have plain identity matrices.

\begin{ex}
For the tree pictured in Figure \ref{fig: pinched tree}, with $S=\{0,1\}$, the edge $gh\rightarrow g$ would have transition matrix
\[
\begin{bmatrix}
    1&0\\0&1
\end{bmatrix} \otimes \begin{bmatrix}
    1\\1
\end{bmatrix} = \begin{bmatrix}
    1&0\\1&0\\0&1\\0&1
\end{bmatrix}
\]
and the edge $gh\rightarrow h$ would have transition matrix 
\[
\begin{bmatrix}
    1\\1
\end{bmatrix} \otimes \begin{bmatrix}
    1&0\\0&1
\end{bmatrix} = \begin{bmatrix}
    1&0\\0&1\\1&0\\0&1
\end{bmatrix}.
\]

The conditional probabilities on the random variables associated to $g$, $h$, and $gh$ are
\[
\pp (X_g=x_g | X_{gh}=x_{gh})=\pp(X_g=x_g|X_{gh}=(\chi_g,\chi_h)) = \begin{cases}
    1 & x_g=\chi_g \\
    0 & x_g \neq \chi_g
\end{cases},
\]
\[
\pp (X_h=x_h | X_{gh}=x_{gh})=\pp(X_h=x_h|X_{gh}=(\chi_g,\chi_h)) = \begin{cases}
    1 & x_h=\chi_h \\
    0 & x_h \neq \chi_h
\end{cases}.
\]

In Figure \ref{fig: pinched tree small}, with $S=\{0,1\}$, the edge $f\rightarrow f'$ would have the identity for a transition matrix.

\end{ex}

The end result is that each tier of $T$ has the same large transition matrix as the corresponding tier of $\pinch(T)$.

\begin{ex}
    Consider the tree $T$ shown in Figure \ref{fig: pinched tree} with its associated $\pinch(T)$.  The transition matrices for tiers 1, 2, and 3 will be the same for both tiered trees. However, in $\pinch(T)$, the tiers 1' and 3' will have transition matrices constructed from identity matrices as described above.
\end{ex}

 Now that we are equipped with a graphical model representation of 
 the local lateral gene transfer model, to prove identifiability
 results, we need a way to determine 
 the structure of the tree based on probability distributions produced by the model
 on that tree. 
 In phylogenetic models, we only observe data at 
 the leaves of the tree while the rest of the variables are hidden from view. 
 We can take the leaf data and put it into a matrix called a flattening matrix. 
 Then conditional independence relations will become visible 
 in the rank of the flattening matrix, as we will describe in Lemma \ref{lemma: rank of flat}.   Low rank of the flattening matrices will allow us to identify 
 splits in the tree $T$.  
For this to work, we need
  that the splits of $T$ and $\pinch(T)$ are identical, so that information on the splits of $\pinch(T)$ will be relevant.

\begin{lemma}
\label{splits_of_pinch}
  Let $T$ be a tiered X-tree.  Then $T$ and $\pinch(T)$ have the same set of splits. 
  That is, $\Sigma(T)  =  \Sigma(\pinch(T))$.
\end{lemma}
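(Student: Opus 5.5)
We compare the two trees through their leaf sets below each edge. Both $T$ and $\pinch(T)$ are trees with leaf set $X$, and both stay trees after the root and all degree-two vertices are suppressed. A split of either tree is therefore determined by the set of leaves lying below some edge, with $X$ minus that set on the other side. So it suffices to show that the family of \emph{clusters}
\[
\mathcal{C}(T)=\{\de(v)\cap X : v \text{ a non-root vertex}\}
\]
coincides, up to the complementation that identifies the two edges at the root, with the corresponding family for $\pinch(T)$. First we check that $\pinch(T)$ is a rooted tree with leaf set $X$. Each leaf of $T$ is either a terminal sibling vertex, in which case it is retained in $V'$, or a vertex lying only on degree-two paths, in which case it is untouched. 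Acyclicity holds because every new edge points strictly down one tier.

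The main step is a correspondence between vertices. A vertex $v$ of $T$ that is not in a non-terminal sibling pair survives in $\pinch(T)$. For such a vertex we claim $\de_{\pinch(T)}(v)\cap X=\de_T(v)\cap X$, and we prove it by induction from the leaves upward along tiers. The inductive step needs three cases.

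\emph{Case 1.} The two children of $v$ in $T$ are sibling vertices $b,c$. Then in $\pinch(T)$ we have $v\to bc$. Chains of pinched vertices $bc\to de\to\cdots$ follow the two descending lineages in lockstep until a terminal pair, which reconnects to the retained vertices. Hence the leaves below $bc$ are exactly the union of the leaves below $b$ and below $c$ in $T$.

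\emph{Case 2.} A pinched vertex $uv$ arises from a non-terminal pair. Its leaf set is the union of the leaf sets of $u$ and $v$ in $T$.

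\emph{Case 3.} $v$ has degree two. Then nothing changes.

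Consequently the clusters of $\pinch(T)$ are of three kinds: clusters of retained vertices of $T$, unions of the clusters of a sibling pair, and the clusters at the terminal vertices themselves. Every one of these is a cluster of $T$, because the union of the clusters of a sibling pair is the cluster of their common ancestor with out-degree two. Conversely, every cluster of $T$ arises in this way. The cluster of a discarded non-terminal sibling vertex $b$ must still be produced, and this requires care. Its out-degree-two descendant eventually splits, and in $\pinch(T)$ the lineage of $b$ is merged with that of its sibling. We must therefore show that $\de_T(b)\cap X$ reappears below the terminal pair.

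I expect this last point to be the main obstacle: discarding non-terminal sibling vertices seems at first to lose the cluster below $b$. The plan is to use the definition of a terminal pair, namely sibling vertices whose children are not sibling vertices, to argue that the lockstep chain starting at $bc$ ends at a pair $(y,z)$. Here $y$ is the unique descendant of $b$ in that tier and lies on a degree-two path from $b$. Hence $\de_T(y)\cap X=\de_T(b)\cap X$, and $y$ is retained in $\pinch(T)$, so the cluster survives.

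With the clusters matched, each edge of either tree yields the split $C\,|\,X\setminus C$. Suppressing the root identifies the splits from the two root edges in both trees, so we conclude $\Sigma(T)=\Sigma(\pinch(T))$. The trivial splits appear on both sides, since each leaf is a retained vertex.
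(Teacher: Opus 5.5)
Your proposal is correct and follows essentially the paper's argument, recast in terms of clusters: both rest on the correspondence between retained or pinched vertices of $\pinch(T)$ and vertices of $T$, together with the fact that degree-two paths do not change the leaf set below. Your explicit induction and your recovery of the clusters of discarded non-terminal sibling vertices actually supply steps the paper's edge-by-edge sketch leaves implicit. There is one harmless slip: the pinched vertices on the chain below the root all have cluster $X$, which is not a non-root cluster of $T$, and the root of $\pinch(T)$ has out-degree one, so the ``two root edges'' remark does not literally apply there; since such clusters yield no split, $\Sigma(T)=\Sigma(\pinch(T))$ still follows.
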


\begin{proof}
    Let $A|B \in \Sigma(T)$, represented by an edge $e$. If $e$ is not a sibling edge in $T$, then $e$ will also be an edge of $\pinch(T)$ representing $A|B$. Suppose $e$ is a terminal edge in $T$.  Also suppose $e$ is incident with vertices $a$ and $c$ such that $a\rightarrow c$ and has a sibling edge incident with vertices $b$ and $d$ such that $b\rightarrow d$. Then in $\pinch(T)$ there are vertices $ab, cd, c,$ and $d$ with edges $ab\rightarrow cd$, $cd\rightarrow c$, and $cd\rightarrow d$. Then the edge $cd\rightarrow c$ represents the split $A|B$ in $\pinch(T)$. 

    Now let $A|B$ be a split in  $\pinch(T)$ represented by an 
    edge $e=u\rightarrow v$. If $v$ is not in $\sib(T)$, 
    then the edge in $T$ incident with and directed toward $v$ 
    represents $A|B$ in $T$. If $v$ is in $\sib(T)$, 
    then there are ancestors $a$ and $b$ of $v$ such that $a\rightarrow b$ is an edge of $\pinch(T)$, neither $a$ nor $b$ is in $\sib(T)$, 
    and the path between $v$ and $a$ passes through only degree-two vertices (it is possible that $u=b$). Then, the edge $a\rightarrow b$ represents the split $A|B$ in $\pinch(T)$, and there is a corresponding edge $a\rightarrow b$ in $T$ which also represents $A|B$. 
\end{proof}

\begin{defn}
Let $A|B$ be an $X$-split.  Let $P= (p(k):  k \in S^{\#X})$ be a joint probability
distribution.  Let $A = \{a_1, \ldots, a_n\}$ and
$B = \{b_1, \ldots, b_m \}$.  Then the \textit{flattening matrix} 
$\Flat_{A|B}(P)$ is the $r^{\#A}\times r^{\#B}$ matrix 
where the rows are by tuples $k_A = (k_{a_1}, \ldots, k_{a_n}) \in S^{\#A}$
and the columns are indexed by tuples $l_B = (l_{b_1}, \ldots, l_{b_m}) \in S^{\#B}$
and where the 
$(k_A,l_B)$-entry of $\Flat_{A|B}(P)$  is 
$p(k_A, l_B)$.
\end{defn}

\begin{ex}
Consider a four leaf tree with taxa $\{1,2,3,4\}$, a split $12|34$, and where $S=\{ 0,1 \}$. The flattening matrix corresponding to this split is the following matrix, where each entry shows the probability of that sequence of states being displayed by the taxa.
\[ \mathrm{Flat}_{12|34}(P)=
    \begin{blockarray}{ccccc}
        & \cdot \cdot  00 & \cdot \cdot  01 &  \cdot \cdot  10 & \cdot \cdot  11 \\
      \begin{block}{c[cccc]}
        00 \cdot \cdot & p(0000) & p(0001) & p(0010) & p(0011) \\
        01 \cdot \cdot  & p(0100) & p(0101) & p(0110) & p(0111) \\
        10 \cdot \cdot  & p(1000) & p(1001) & p(1010) & p(1011) \\
        11 \cdot \cdot  & p(1100) & p(1101) & p(1110) & p(1111) \\
      \end{block}
    \end{blockarray}
\]
\end{ex}


Flattening matrices will help us compare various splits of taxa on our tree, but there is some nuance in the difference between splits of a standard tree and splits of a pinched tree. When examining splits of a tree, sometimes it is useful to subdivide the edge corresponding to a certain split and re-root the tree on the new vertex. Typically, in a phylogenetic tree each random variable has the same state space, so the new root will be similar to all other vertices. In $\pinch(T)$, some vertices have state space $S$ and others have state space $S^2$. However, when we choose the edge which corresponds to a split as in the proof above, we can subdivide it to obtain a vertex with state space $S$. This notion is critical in the proof of the next lemma.

\begin{lemma}
   \label{lemma: rank of flat}
    Let $T$ be a tree, and let $M(T,G^{loc})$ be the 
    lateral gene transfer model on $T$ with $\#S = r$. 
    Then for any $A|B\in \Sigma(T)$ and any $P\in M(T,G^{loc})$
\[\rank(\Flat_{A|B}(P))\leq r.\]
Furthermore, if $A|B\notin \Sigma(T)$, then for generic $P\in M(T,G^{loc})$
\[\rank(\Flat_{A|B}(P))\geq r^2.\]
\end{lemma}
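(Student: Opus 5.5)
For the upper bound I will work in $\pinch(T)$ rather than $T$. By Lemma \ref{model_is_graphical_model}, every $P\in M(T,G^{loc})$ is the leaf marginal of a distribution in the graphical model on $\pinch(T)$. By Lemma \ref{splits_of_pinch}, a split $A|B\in\Sigma(T)$ is represented by an edge $e$ of $\pinch(T)$.

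I will choose $e$ as in the proof of Lemma \ref{splits_of_pinch}, so that its endpoints have state space $S$. For a terminal edge I take $cd\rightarrow c$. Otherwise I take an edge $a\rightarrow b$ neither of whose endpoints comes from a sibling pair.

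I subdivide $e$ by a new vertex $w$ with state space $S$, putting the transition matrix on one half and the identity on the other. I then re-root at $w$. Re-rooting preserves the graphical model, because the joint distribution along a directed path can be reversed using Bayes' rule. The direction of the edges is irrelevant here anyway: by Theorem \ref{local_markov_factorization}, the directed local Markov property gives $Y_A\indep Y_B\mid Y_w$ on the leaves. The reason is that the leaves in $A$ and those in $B$ lie in different components of $\pinch(T)\setminus\{w\}$, and the factorization splits at $w$.

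Summing out all hidden variables then gives
\[
\Flat_{A|B}(P)=M_A^{T}\,\diag(\pi_w)\,M_B ,
\]
where $M_A$ is the $r\times r^{\#A}$ matrix of conditional probabilities $\pp(Y_A=k_A\mid Y_w=s)$, and $M_B$ is defined similarly. Since $\#S=r$, the middle factor is $r\times r$, and therefore $\rank\le r$.

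For the lower bound, let $A|B\notin\Sigma(T)$. By Theorem \ref{splits_theorem}, $A|B$ is incompatible with some split of $T$, so there are leaves $a_1,a_2\in A$ and $b_1,b_2\in B$ such that the induced quartet of $T$ is $a_1b_1|a_2b_2$. The rank of a flattening is a lower semicontinuous function of the parameters and is given by the nonvanishing of minors, which are polynomials in the parametrization. So it suffices to exhibit a single parameter choice in $M(T,G^{loc})$ with $\rank\ge r^2$. Genericity then follows, because the model is the image of an irreducible parameter space under a polynomial/analytic map.

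I will specialize the parameters as follows:
\begin{itemize}
\item all $\lambda$ parameters are sent to $0$, or are made arbitrarily small, using continuity and positivity;
\item the transition matrices on the paths joining $a_1,a_2,b_1,b_2$ are near the identity, or are invertible;
\item the edges leading to every other leaf are made fully mixing, i.e.\ their transition matrices have all rows equal to $\pi$, taken as a limit.
\end{itemize}

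With these choices, the flattening $\Flat_{A|B}(P)$ becomes $\Flat_{a_1a_2|b_1b_2}(P')$ tensored with rank-one factors. Here $P'$ is the distribution of an ordinary Markov model on the quartet $a_1b_1|a_2b_2$.

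The key classical fact is that for the quartet $a_1b_1|a_2b_2$ with generic invertible matrices, the flattening $\Flat_{a_1a_2|b_1b_2}$ has rank $r^2$. This is the Allman–Rhodes argument. One takes the internal edge to have identity transition matrix and the pendant edges to be invertible. The flattening is then the Kronecker product of $\diag(\pi)$-weighted invertible matrices pairing $a_1$ with $b_1$ and $a_2$ with $b_2$, which is an $r^2\times r^2$ invertible matrix up to row and column permutation.

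The main obstacle is checking that this degenerate specialization is legitimately in the closure of $M(T,G^{loc})$, since the model is a continuous-time, tiered model. I would handle it as follows:
\begin{itemize}
\item To get approximately identity transitions, shrink the edge rates.
\item To get mixing, use large rates on the pendant edges of the extra leaves.
\item To send $\lambda\to 0$, note that this recovers the product matrices $\trunc(P_b\otimes P_c)$ of Section \ref{sec:Markov}.
\end{itemize}
Since rank $\ge r^2$ is an open condition, it then holds at nearby genuine parameters, and hence generically. If the identity specialization is a problem because of the equal tier lengths, I would instead use generic invertible $P$ on the quartet paths and invoke the same rank computation.
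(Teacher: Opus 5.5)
Your upper bound is the paper's argument. You subdivide an edge of $\pinch(T)$ representing $A|B$ at a vertex with state space $S$, root there, and factor $\Flat_{A|B}(P)$ through an $r\times r$ diagonal matrix.

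The lower bound has a genuine gap, at exactly the step that is supposed to produce rank $r^2$. You give the internal edge of the quartet $a_1b_1|a_2b_2$ the identity transition matrix and the pendant edges invertible matrices. Then the two internal vertices always carry the same state, and the quartet collapses to a star. So $a_1,a_2,b_1,b_2$ are conditionally independent given a single hidden $r$-state variable. Consequently $\Flat_{a_1a_2|b_1b_2}(P')$ factors through an $r\times r$ diagonal matrix, exactly as in your upper bound, and has rank at most $r$, not $r^2$.

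The Kronecker-product picture you describe, pairing $a_1$ with $b_1$ and $a_2$ with $b_2$, arises when the internal edge is fully mixing (rank one), so that the two cherries are independent. It does not arise when the internal edge is the identity. Your specialization list has the same defect: it never says which quartet path must be non-trivial. Sending all of those paths toward the identity makes every leaf a copy of one state, which again gives rank $\le r$. Your fallback, ``generic invertible matrices with the same rank computation,'' inherits the faulty computation.

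The repair is to swap the roles, which is what the paper does. Send all $\lambda$ to $0$. Set every rate matrix to $0$ except on one edge realizing a split $C|D\in\Sigma(T)$ incompatible with $A|B$, and give that edge a rate matrix $\alpha$ with $\exp(\alpha)$ entrywise positive. Then $P$ is supported exactly on the patterns that are constant on $C$ and constant on $D$, each with positive probability. Take the $r^2\times r^2$ submatrix of $\Flat_{A|B}(P)$ whose rows are the tuples constant on $A\cap C$ and on $A\cap D$, and whose columns are the tuples constant on $B\cap C$ and on $B\cap D$. It has exactly one nonzero entry in each row and column, so it has rank $r^2$. Since a submatrix already bounds the rank from below, this also removes the need for your mixing limits on the other leaves.

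Alternatively, you could keep your quartet reduction, make the internal path fully mixing via a large rate, and keep the pendant paths invertible. Your openness and analyticity argument for passing from one point to generic points is fine once the specialization is corrected.
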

\begin{proof}
    The proof will closely follow the proofs in \cite{Tree_Symmetries}, \cite{allman2006phylogeneticidealsvarietiesgeneral}, and \cite[Proposition 15.4.4]{Algebraic_Statistics}. 

    In this proof we assume that $|A|,|B|\geq 2$. First we will subdivide the edge of $\pinch(T)$ as described in Lemma \ref{splits_of_pinch} and make the new degree two vertex the root of the tree. We can treat each side of this split as its own tree in which large tier-encompassing rate matrices are formed. Then the joint distribution of states on the leaves is 
    \[
    P_{i_1 \ldots i_m}=\sum_{j=1}^r \pi_j F(i_A, j)G(j,i_B)
    \]
    where $F(i_A,j)$ is the function obtained by summing out over all the states of the  hidden variables between the root node and the set of leaves $A$, and $G(j,i_B)$ is the function obtained by summing out over all the hidden variables between the root node and the set of leaves $B$. Let $F$ denote the resulting $r^{|A|}\times r$ matrix which is the flattening of this tensor and similarly let $G$ denote the resulting $r\times r^{|B|}$ matrix. Then the joint distribution on the leaves can be expressed as 
    \[
    \Flat_{A|B}(P)=F \diag (\pi)G
    \]
    which shows that $\rank(\Flat_{A|B}(P))\leq r$.
   
    For the second part of the theorem, note that it suffices to find a single $P\in M(T,G^{loc})$ such that $\rank(\Flat_{A|B}(P))\geq r^2$. This is because the set of matrices with $\rank(\Flat_{A|B}(P))< r^2$ is the variety $\mathcal{V}_{A|B} = \{ P\in \rr^{r^{|X|}}| \det(\Flat_{A|B}(P))=0\} $, and the Zariski closure of the model $M(T,G^{loc})$ is an irreducible variety. Therefore, if there is a single point $P\in M(T,G^{loc})$ such that $\det(\Flat_{A|B}(P))\neq 0$, $M(T,G^{loc})$ is not contained in $\mathcal{V}$, and the intersection of $\mathcal{V}$ and $M(T,G^{loc})$ is a proper subvariety of $M(T,G^{loc})$. This intersection must then be of measure zero within $M(T,G^{loc})$, and $\Flat_{A|B}(P)$ generically has full rank. 
    
    So suppose that $A|B \notin \Sigma(T)$. This means that there exists another split $C|D\in \Sigma(T)$ such that $A|B$ and $C|D$ are incompatible, and the edge corresponding to $C|D$ can be selected by Lemma \ref{splits_of_pinch}. We construct the distribution $P$ as follows. 
    Set all lambda parameters to zero; this way the model reduces to the case with no lateral gene transfer. Set all rate matrices except the one corresponding to the edge $C|D$ to the zero matrix, and set the rate matrix for the edge corresponding to $C|D$ to an arbitrary $r\times r$ rate matrix $\alpha$. Additionally, let each tier have length $1$. The rate matrices are then combined using the Kronecker sum, appropriately truncated, and exponentiated as in Section \ref{sec:Markov}. 
    
    The resulting joint distribution $P$ has the following form:
    \[
    P_{i_1...i_m}=\begin{cases}
        \exp{(\alpha)}_{jk} & \text{if } i_c=j \text{ and } i_d=k \text{ for all } c\in C,d\in D, \\ 0 & \text{otherwise.}
    \end{cases}
    \]
    Since $A|B$ and $C|D$ are not compatible this means that none of $A \cap C$, $A\cap D$, $B\cap C$, or $B\cap D$ are empty. Now form the sub-matrix $M$ of $\Flat_{A|B}(P)$ whose row indices consist of all tuples $i_A$ such that $i_{A\cap C}$ and $i_{A\cap D}$ are constant, and whose column indices consist of all tuples $i_B$ such that $i_{B\cap C}$ and $i_{B\cap D}$ are constant. The matrix $M$ will be a $r^2\times r^2$ matrix which has exactly one nonzero entry in each row and column. Such a matrix has rank $r^2$ which implies that $\rank(\Flat_{A|B}(P))\geq r^2$.
\end{proof}

In this way, we examine the ranks of flattening matrices on the graph $\pinch(T)$, while still gaining information on $T$, since $T$ and $\pinch(T)$ have the same set of splits. Then we can begin to distinguish between trees, by comparing the valid or invalid splits from each tree. This leads to the main result.


\begin{thm}
    Let $T$ and $T'$ be binary tiered X-trees with differing unrooted topologies. Then under  the Local Gene Transfer Model these trees are generically distinguishable.
\end{thm}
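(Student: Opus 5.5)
Since $T$ and $T'$ have different unrooted topologies, Theorem \ref{splits_theorem} gives $\Sigma(T)\neq\Sigma(T')$. Both contain the trivial splits, so there is a nontrivial split $A|B$ lying in exactly one of them. We may assume $A|B\in\Sigma(T)\setminus\Sigma(T')$, and after relabeling the trees the argument below applies symmetrically. Since $A|B$ is nontrivial and $T$ is binary, we may also take $|A|,|B|\geq 2$, so that $\Flat_{A|B}(P)$ has at least $r^2$ rows and columns.

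The separating polynomials are the $r^2$ (in fact $(r+1)$) minors of $\Flat_{A|B}$. By the first part of Lemma \ref{lemma: rank of flat}, which is justified through the graphical model of $\pinch(T)$ (Lemma \ref{model_is_graphical_model}) and the equality $\Sigma(T)=\Sigma(\pinch(T))$ (Lemma \ref{splits_of_pinch}), every $P\in M(T,G^{loc})$ satisfies $\rank\Flat_{A|B}(P)\leq r$. Hence all $(r+1)\times(r+1)$ minors of $\Flat_{A|B}$ vanish on $M(T,G^{loc})$, and therefore on its Zariski closure. By the second part of Lemma \ref{lemma: rank of flat} applied to $T'$, a generic $P\in M(T',G'^{loc})$ has $\rank\Flat_{A|B}(P)\geq r^2>r$ (for $r\geq 2$). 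So some $(r+1)$-minor $f$ is not identically zero on $M(T',G'^{loc})$.

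To turn this into the dimension inequality, note that $M(T',G'^{loc})$ is the image of a connected parameter space under an analytic (polynomial in exponentials) map, so its Zariski closure $V'$ is irreducible; the proof of Lemma \ref{lemma: rank of flat} already uses this. The intersection $M(T,G^{loc})\cap M(T',G'^{loc})$ lies in $V'\cap\{f=0\}$. Because $f$ does not vanish identically on the irreducible variety $V'$, this is a proper subvariety, so its dimension is strictly less than $\dim V'=\dim M(T',G'^{loc})$.

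The main obstacle is getting the minimum of the two dimensions, as the definition requires. We only have $\dim(\cap)<\dim M(T')$, which does not by itself bound the intersection by $\dim M(T)$ when $\dim M(T)\le\dim M(T')$. To close this gap, I would first try to prove the symmetric statement: find a split in $\Sigma(T')\setminus\Sigma(T)$, and run the same argument to get $\dim(\cap)<\dim M(T)$. For binary trees with different topologies this is available: both trees have the same number $|X|-3$ of nontrivial splits, so if $\Sigma(T)\neq\Sigma(T')$, each contains a nontrivial split the other lacks. Applying the argument in both directions then gives the strict inequality against both dimensions, hence against their minimum, and so generic distinguishability.
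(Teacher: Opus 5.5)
Your proposal is correct and follows the paper's own proof. Both arguments take a split in $\Sigma(T)\setminus\Sigma(T')$ and a split in $\Sigma(T')\setminus\Sigma(T)$, apply both parts of Lemma \ref{lemma: rank of flat} in each direction, and use the closed rank-$\leq r$ condition on the irreducible model to get the strict inequality against both dimensions. Your count of the $|X|-3$ nontrivial splits of a binary tree also justifies why both set differences are nonempty, which the paper asserts without comment.
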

\begin{proof}

Let $T$ and $T'$ be tiered $X$-trees under the Local Gene Transfer Model with differing unrooted topologies. 
Since the unrooted topologies of $T$ and $T'$ are different, there is a split $A|B \in \Sigma(T)
\setminus \Sigma(T')$ and a splits $C|D \in \Sigma(T') \setminus \Sigma(T)$.
We apply Lemma \ref{lemma: rank of flat}.
Let $P \in M(T,G^{loc})$ be any probability distribution.  According to 
 Lemma \ref{lemma: rank of flat} ,  $\rank(\Flat_{A|B}(P)) \leq r $.  In addition, 
if $P$ is generic, then $\rank(\Flat_{C|D}(P)) \geq r^2 $.
 On the other hand, if $P' \in M(T',G^{loc})$ is any probability distribution, then 
 $\rank(\Flat_{C|D}(P')) \leq r $.  In addition, if $P'$ is generic, then 
 $\rank(\Flat_{A|B}(P')) \geq r^2 $.  Since having rank $\leq r$ is a closed condition,
 this shows that 
\[
 \dim(M(T,G^{loc}) \cap  M(T',G^{loc}) )  <  \min  (\dim M(T,G^{loc}), \dim M(T',G^{loc}))
\]
Thus $T$ and $T'$ are generically distinguishable.
\end{proof}


\section{Simulation Studies}  \label{sec:sims}

\begin{figure}[]
\centering
\resizebox{.5\textwidth}{!}{%
\begin{circuitikz}
\tikzstyle{every node}=[font=\huge]

\draw [ fill={rgb,255:red,0; green,0; blue,0} ] (-2.5,13.25) circle (0.25cm) node [left=7pt]{};

\draw [line width=2pt, short] (-2.5,13.25) -- (-5,9.5);
\draw [line width=2pt, short] (-2.5,13.25) -- (0,9.5) ;

\draw [ fill={rgb,255:red,0; green,0; blue,0} ] (0,9.5) circle (0.25cm) node [left=7pt]{};
\draw [ fill={rgb,255:red,0; green,0; blue,0} ] (-5,9.5) circle (0.25cm) node [left=7pt]{};
\draw [ fill={rgb,255:red,0; green,0; blue,0} , line width=0.2pt ] (2.5,5.75) circle (0.25cm) node [left=7pt]{};
\draw [ fill={rgb,255:red,0; green,0; blue,0} , line width=0.2pt ] (-2.5,5.75) circle (0.25cm) node [left=7pt]{};
\draw [ fill={rgb,255:red,0; green,0; blue,0} , line width=0.2pt ] (-7.5,5.75) circle (0.25cm) node [left=7pt]{};
\draw [line width=2pt, short] (-5,9.5) -- (-7.5,5.75);
\draw [line width=2pt, short] (0,9.5) -- (-2.5,5.75);

\draw [line width=2pt, short] (-2.5,5.75) -- (-5,2);
\draw [line width=2pt, short] (2.5,5.75) -- (0,2) ;

\node [font=\LARGE] at (-3.25,3.25) {};
\draw [line width=2pt, short] (0,9.5) -- (2.5,5.75);
\draw [line width=2pt, short] (2.5,5.75) -- (5,2);
\draw [line width=2pt, short] (-7.5,5.75) -- (-10,2);
\draw [ fill={rgb,255:red,0; green,0; blue,0} , line width=0.2pt ] (-10,2) circle (0.25cm) node [left=7pt]{$1$};
\draw [ fill={rgb,255:red,0; green,0; blue,0} , line width=0.2pt ] (0,2) circle (0.25cm) node [left=7pt]{$3$};
\draw [ fill={rgb,255:red,0; green,0; blue,0} , line width=0.2pt ] (-5,2) circle (0.25cm) node [left=7pt]{$2$};
\draw [ fill={rgb,255:red,0; green,0; blue,0} , line width=0.2pt ] (5,2) circle (0.25cm) node [left=7pt]{$4$};

\end{circuitikz}
}%
\caption{A rooted caterpillar tree on four leaves displaying split $12|34$.}
\label{fig: caterpiller sim}
\end{figure}

Besides being a tool for proving identifiability results, flattening 
matrices with low rank can be used as a tool to reconstruct a phylogenetic
tree from data.  The standard approach involves using the
singular value decomposition to test if a flattening matrix is
close to low rank  (for example, the SVD Quartets method \cite{Chifman2014-kl}).
In this section, we explore this approach on data simulated from
the Local Gene Transfer model.

 We begin by constructing a probability distribution on a caterpillar tree with split $12|34$ (Figure \ref{fig: caterpiller sim}). We will use a binary model with $S=\{0,1\}$, and with each tier having length $1$. 
We iterate over mutation rate values between $0$ and $1$ so that the rate of change from state $0$ to state $1$ and from state $1$ to state $0$ are always equal. We further iterate over lateral gene transfer rates between $0$ and $0.5$. For each possible mutation/LGT parameter pairing, we generate a probability distribution by following the process outlined in Sections 2 and 3. Due to numerical error, these probability distribution vectors occasionally do not sum to one (with the sum having a difference of less than $10^{-8}$). To correct this issue we normalize the vector, thus preserving the ratios between each probability. From this distribution we generate a sequence of sites of length $1000$ and construct an empirical distribution $\hat{P}$ based on the proportion of each site pattern generated. 

We arrange the simulated distributions into the three possible flattening matrices: 
$\mathrm{Flat}_{12|34}(\hat{P})$, $\mathrm{Flat}_{13|24}(\hat{P})$, and $\mathrm{Flat}_{14|23}(\hat{P})$. 
For each matrix we compute the SVD score \cite{Chifman2014-kl}, which will compute the distance between the matrix and the nearest matrix of rank 2:
\[
SVD(A)=\sqrt{\sigma_3^2+\sigma_4^2},
\]
where $A$ is a matrix, and $\sigma_3$ and $\sigma_4$ are the third and fourth singular values of $A$. 
Thus a small SVD score indicates that the matrix is close to being rank two. 
The flattening matrix with the smallest SVD score is chosen as the inferred split of the original tree. After one hundred repetitions of this process, we record the percent for each mutation-LGT parameter pairing that the correct split was inferred. 
\begin{figure}
    \centering
    \includegraphics[width=0.8\linewidth]{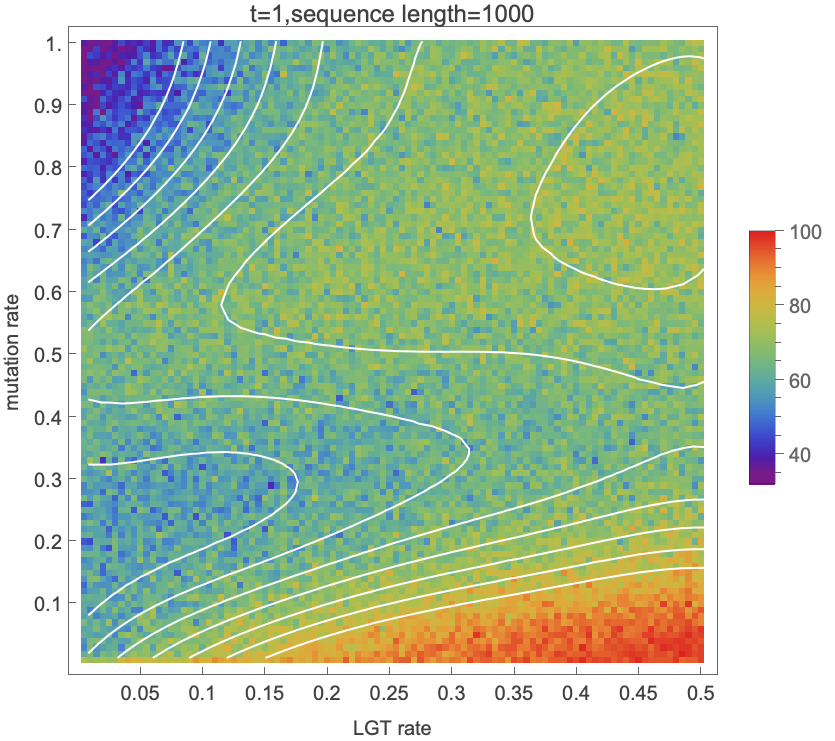}
    
    \caption{Each square of the plot represents 100 trials of inferring the split of the tree in Figure \ref{fig: caterpiller sim}. The percent of time the correct split was predicted is represented by the shade of the square, with violet being a low percentage and red being a higher percentage. Overlaid on the plot are ten isoclines created using Mathematica's model fit function \textit{kernel}.}
    \label{fig:heatmap}
\end{figure}

Figure \ref{fig:heatmap} shows that under the Local Gene Transfer model, an increase in the rate of lateral gene transfer improves the probability of correctly inferring tree structure. This effect is likely because as the rate of lateral gene transfer increase, the divergence of lineages is pushed farther forward in time. Since lateral gene transfer is restricted to sibling pairs, it has a strengthening effect on the relatedness of the two lineages. In the case of Figure \ref{fig: caterpiller sim} there is less time for the sibling lineages 3 and 4 to mutate away from each other. The branch lengths of tier 3 are effectively shortened, decreasing the possibility of mutations, while the length of tier 1 is increased, making mutations more likely. Tier 2 remains a significant length which allows for the split $12|34$ to be solidified.

The mutation rate has a different effect. When the mutation rate is high, with low LGT, there is too much mutation for the correct tree structure to be inferred. In Figure \ref{fig:heatmap}, the upper left corner shows very low rates of success, nearly at a completely random level. Interestingly, there is a band of relatively low success rate where the mutation rate is roughly between $0.2$ to $0.4$. This may be because there is not enough mutation to adequately differentiate between lineages. This band seems to have a success rate of $50\%$, which although low, is still higher than random chance.

\begin{figure}
    \centering
    \subfloat[$\lambda = 0.0$]{\includegraphics[width=0.32 \linewidth]{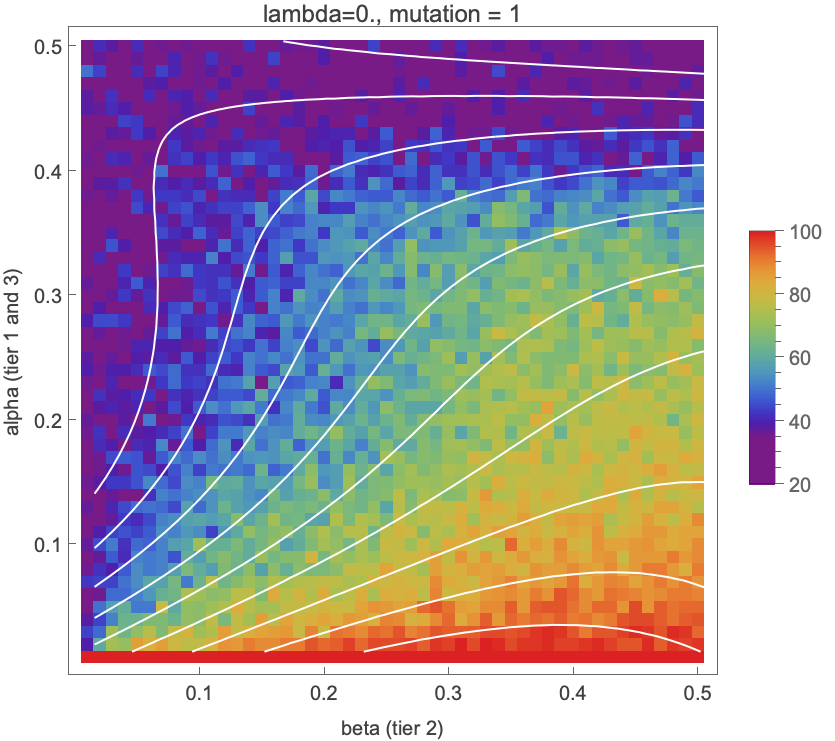}}
    \subfloat[$\lambda = 0.1$]{\includegraphics[width=0.32 \linewidth]{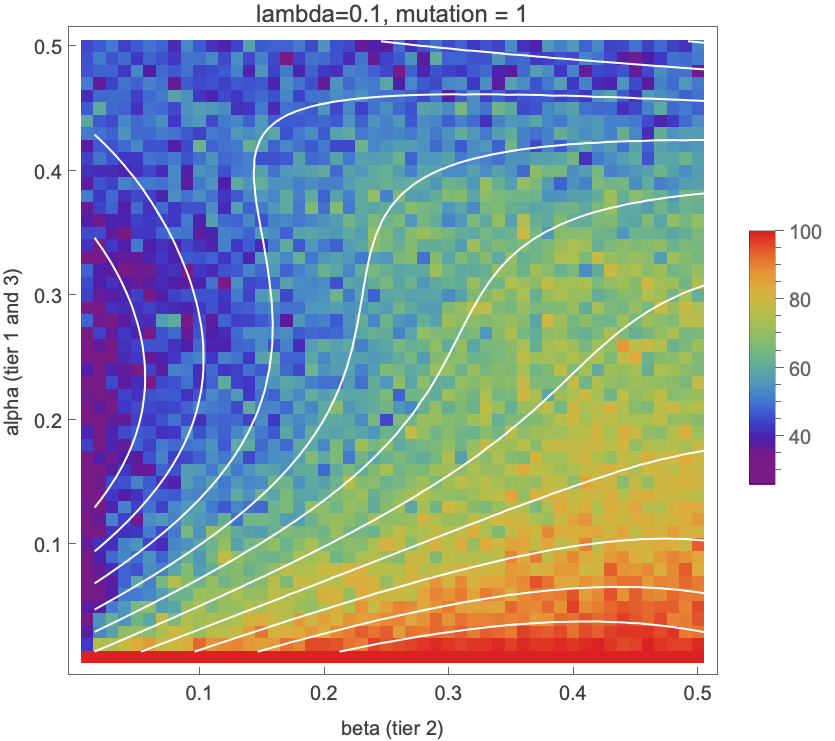}}
    \subfloat[$\lambda = 0.2$]{\includegraphics[width=0.32 \linewidth]{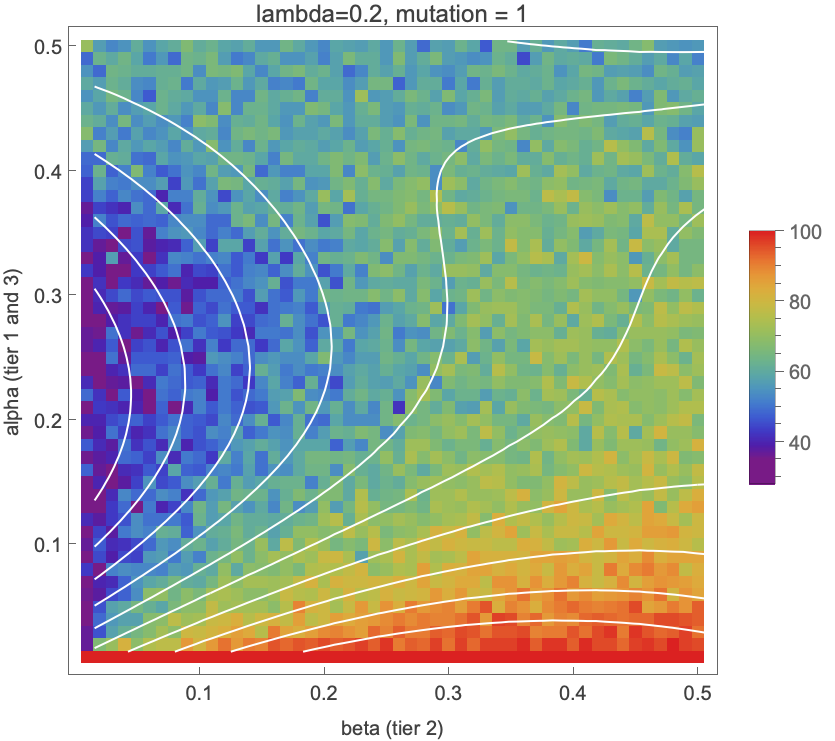}}
    
    \subfloat[$\lambda = 0.3$]{\includegraphics[width=0.32 \linewidth]{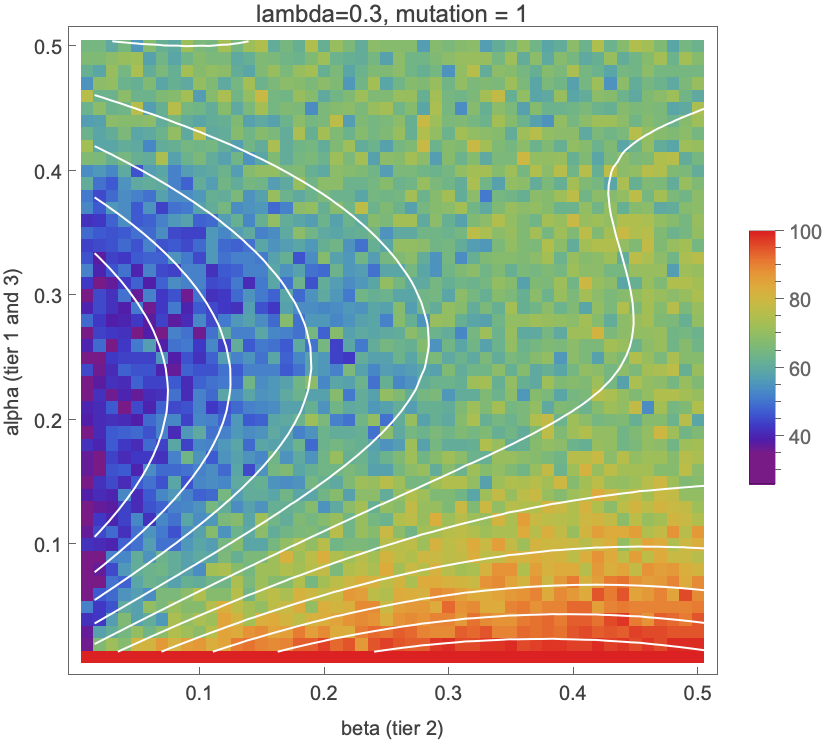}}
    \subfloat[$\lambda = 0.4$]{\includegraphics[width=0.32 \linewidth]{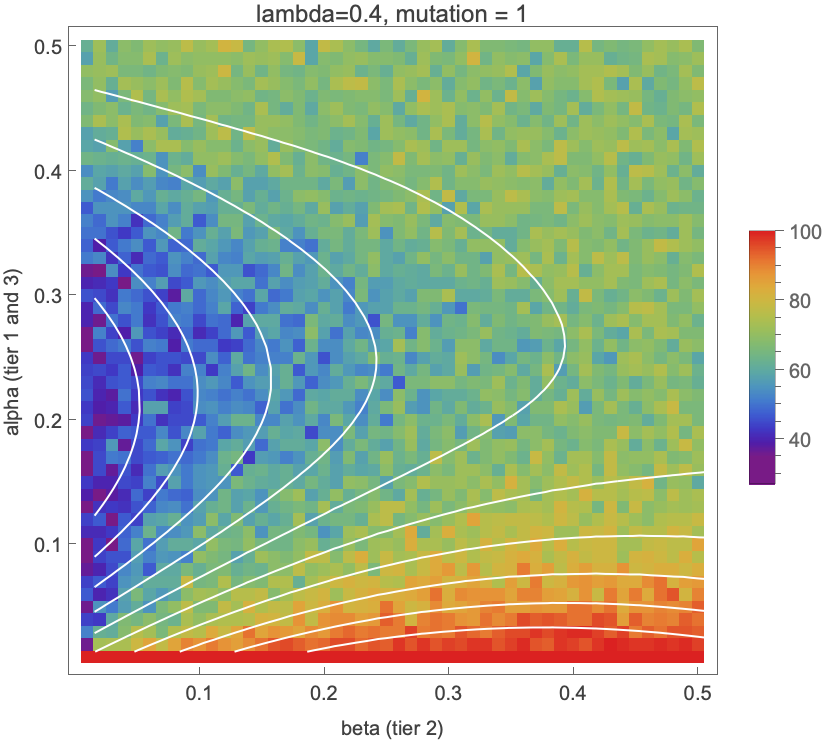}}
    \subfloat[$\lambda = 0.5$]{\includegraphics[width=0.32 \linewidth]{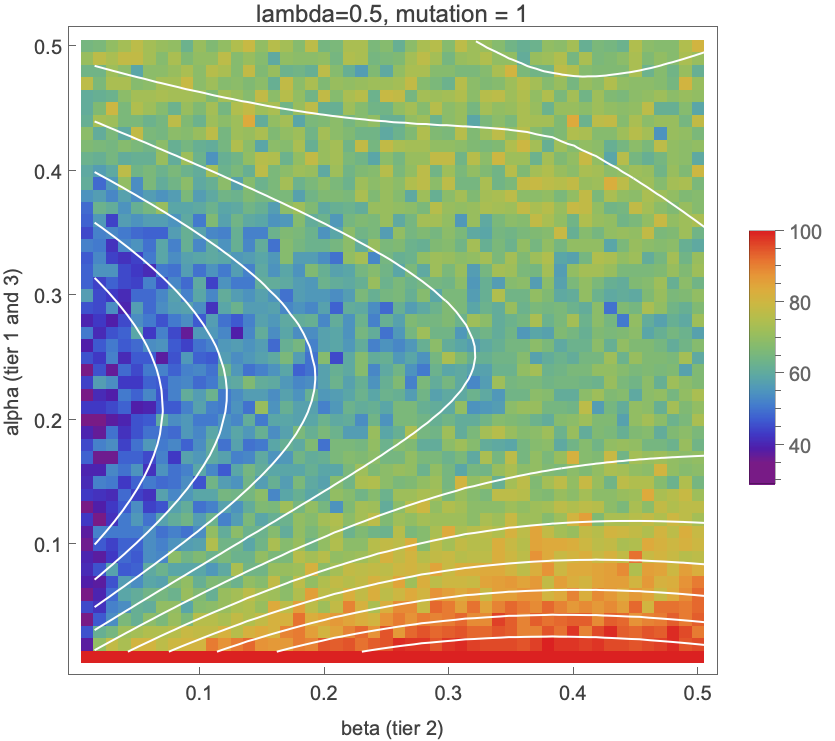}}
    \caption{Each graph charts the success rate of determining the correct species tree with the length of the second tier on the horizontal axis and the length of the first and third tiers on the vertical axis. The lengths range from $0$ to infinity, represented here by $t_{1,3}=-0.5\ln{(-2\alpha+1)}, t_2=-0.5\ln{(-2\beta+1)}$. The mutation rate is set to $1$ while the lgt rate ranges from $0.1$ to $0.5$ across the five plots. In white overlaid on the charts are ten isoclines created using a cubic model fit.}
    \label{fig: tier lengths comparison}
\end{figure}

Figure \ref{fig: tier lengths comparison} plots the success rate of correctly choosing the tree topology while comparing the length of tier 2 with the lengths of tiers one and three. In this simulation, the length of tiers one and three were locked together, both ranging from length zero to infinity. The length of tiers one and three are represented on the vertical axis, and similarly the length of tier 2 ranges from zero to infinity on the horizontal axis. This was accomplished using a change of variables:
\begin{align*}
    \alpha &= 0.5 (1-e^{2t_{13}}), \\
    \beta &= 0.5 (1-e^{2t_{2}})
\end{align*}
where $t_{13}$ is the length of each of tiers 1 and 3, and $t_2$ is the length of tier 2. This chart is plotted for six values of LGT between $0$ and $0.5$. The plots indicate that as tier 2 increases in length, the success rate increases. We believe this is likely 
due to the fact that as tier 2 increases in length, the edge corresponding to split $12|34$ increases in length. Short lengths for tiers 1 and 3 generally strengthen the bond between the pairs of lineages $(1, 2)$, and $(3, 4)$, especially with high rates of lateral gene transfer. However, with moderate lengths of tiers 1 and 3 and a short tier 2, there may be just enough mutation events to undo any synergy within the lineage pairs. Once tiers 1 and 3 become long enough, the number of mutation events causes randomness such that sufficiently large LGT rates are able to overcome the randomness. LGT events effectively shorten the length of the tier. In this way, even as branch lengths tend toward infinity, the lateral gene transfer process shortens the branches to a more manageable length, where inference is possible.

\bibliography{refs}{}
\bibliographystyle{plain}

\end{document}